\definecolor{webgreen}{rgb}{0,.5,0}
\definecolor{webbrown}{rgb}{.6,0,0}
\def\Enn{\mathbb{N}}
\def\suchthat{\, : \, }
\DeclareMathOperator{\theory}{FO}
\DeclareMathOperator{\faceq}{faceq}
\DeclareMathOperator{\appear}{appear}
\DeclareMathOperator{\leastappear}{leastappear}
\DeclareMathOperator{\recurfac}{recurfac}
\DeclareMathOperator{\recur}{recur}
\DeclareMathOperator{\leastrecur}{leastrecur}
\DeclareMathOperator{\spann}{span}
\DeclareMathOperator{\minspan}{minspan}
\DeclareMathOperator{\maxspan}{maxspan}
\DeclareMathOperator{\saspan}{saspan}
\DeclareMathOperator{\sa}{sa}
\newcommand{\recurconst}{\mathbf{R}}
\newcommand{\appearconst}{\mathbf{A}}
\begin{document}

\theoremstyle{plain}
\newtheorem{theorem}{Theorem}
\newtheorem{corollary}[theorem]{Corollary}
\newtheorem{lemma}[theorem]{Lemma}
\newtheorem{proposition}[theorem]{Proposition}

\theoremstyle{definition}
\newtheorem{definition}[theorem]{Definition}
\newtheorem{example}[theorem]{Example}
\newtheorem{conjecture}[theorem]{Conjecture}

\theoremstyle{remark}
\newtheorem{remark}[theorem]{Remark}

\title{String Attractors for Automatic Sequences}

\author{Luke Schaeffer\\
Institute for Quantum Computing \\
University of Waterloo \\
Waterloo, ON  N2L 3G1 \\
Canada \\
\href{mailto:lrschaeffer@gmail.com}{\tt lrschaeffer@gmail.com}
\and
Jeffrey Shallit\\
School of Computer Science \\
University of Waterloo \\
Waterloo, ON  N2L 3G1 \\
Canada \\
\href{mailto:shallit@uwaterloo.ca}{\tt shallit@uwaterloo.ca} }

\maketitle

\begin{abstract}
We show that it is decidable, given an automatic sequence $\bf s$ 
and a constant $c$, whether
all prefixes of $\bf s$ have a string attractor of size $\leq c$.
Using a decision procedure based on this result, we show that
all prefixes of the period-doubling sequence of length $\geq 2$
have a string attractor of size $2$.   We also prove analogous
results for other sequences, including the Thue-Morse sequence 
and the Tribonacci sequence.

We also provide general upper and lower bounds on string attractor size for different kinds of sequences.   For example, if $\bf s$ has a finite appearance constant, then there is a string attractor for
${\bf s}[0..n-1]$ of size $O(\log n)$.   If further
$\bf s$ is linearly recurrent, then there is a string
attractor for ${\bf s}[0..n-1]$ of size $O(1)$.  For automatic sequences, the size of the smallest string attractor for ${\bf s}[0..n-1]$ is either $\Theta(1)$
or $\Theta(\log n)$, and it is decidable which case
occurs.   Finally, we close with some remarks about greedy string attractors.
\end{abstract}

\section{Introduction}

Recently Kempa and Prezza \cite{Kempa&Prezza:2018}
introduced the notion of string attractor.
Let $w = w[0..n-1]$ be a finite word, indexed beginning at position $0$.
A {\it string attractor\/} of
$w$ is a subset $S \subseteq \{ 0,1,\ldots, n-1 \}$
such that every nonempty factor $f$ of $w$ has an occurrence
in $w$ that touches one of the indices of $S$.
For example, $\{2,3,4 \}$ is a string attractor for the English word
{\tt alfalfa}, and no smaller string attractor exists for that word.
Also see Kociumaka et al.~\cite{Kociumaka:2020}.

It is of interest to estimate the minimum possible size $\gamma(w)$ of a
string attractor for classically-interesting
words $w$.  This has been done,
for example, for the finite Thue-Morse words $\mu^n (0)$, where
$\mu$ is the map sending $0 \rightarrow 01$ and $1 \rightarrow 10$,
and $\mu^n$ denotes the $n$-fold composition of $\mu$ with itself.
Indeed, Mantaci et al.~\cite{Mantaci:2019}
proved that $\mu^n (0)$ has a string attractor of size
$n$ for $n \geq 3$,
and this was later improved to the constant $4$, for all $n \geq 4$, by 
Kutsukake et al.~\cite{Kutsukake:2020}.

In this note we first show that it is decidable, given an automatic
sequence $\bf s$ and a constant $c$,
whether all prefixes of $\bf s$ have a string attractor of size
at most $c$.   Furthermore, if this is the case, we can construct
an automaton that, for each $n$, provides the elements of a
minimal string attractor for the length-$n$ prefix of $\bf s$.
We illustrate our ideas by proving that the minimal string attractor
for length-$n$
prefixes of the period-doubling sequence is of cardinality $2$
for $n \geq 2$, and we obtain analogous results for the
Thue-Morse sequence, the Tribonacci sequence, and two others.

We use the notation $[i..j]$
for $\{ i, i+1, \ldots, i+j \}$, and ${\bf w}[i..j]$ for the factor $w[i] w[i+1] \cdots w[j]$.

\section{Automatic sequences}
\label{basics}

A numeration system represents each natural number $n$
uniquely as a word over some finite alphabet $\Sigma$.
If further the set of canonical representations is regular,
and the relation $z = x+y$ is recognizable by a finite automaton,
we say that the numeration system is {\it regular}.  
Examples of regular numeration systems include base $b$, for integers
$b \geq 2$ \cite{Allouche&Shallit:2003};
Fibonacci numeration \cite{Frougny:1986}; Tribonacci numeration
\cite{Mousavi&Shallit:2015}; and Ostrowski numeration systems
\cite{Baranwal:2020}.

Finally,
a sequence ${\bf s} = (s_n)_{n \geq 0}$ is {\it automatic\/} if
there exists a regular numeration system and an automaton
that, on input the representation of $n$, computes $s_n$.
We have the following result \cite{Charlier&Rampersad&Shallit:2012}:
\begin{theorem}
\label{thm1}
Let $\bf s$ be an automatic sequence.  
\begin{itemize}
\item[(a)] There is an algorithm that,
given a well-formed formula $\varphi$ in
the $\theory(\Enn, +, 0, 1, n \rightarrow {\bf s}[n])$
having no free variables, decides if $\varphi$ is true or false.
\item[(b)]
Furthermore, if $\varphi$ has free variables, then the algorithm
constructs an automaton recognizing the representation of the values
of those variables for which $\varphi$ evaluates to true.
\end{itemize}
\end{theorem}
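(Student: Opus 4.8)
The plan is to prove Theorem~\ref{thm1} by the classical automata-theoretic method going back to B\"uchi: proceed by structural induction on the formula $\varphi$, maintaining the invariant that for every subformula $\psi(x_1,\dots,x_m)$ with free variables among $x_1,\dots,x_m$ one can effectively construct a deterministic finite automaton $A_\psi$ that reads the $m$ numeration-system representations of an $m$-tuple in parallel --- one symbol from each representation per step, the shorter representations padded with leading zeros so that all $m$ tracks have equal length --- and accepts precisely the padded representations of those tuples $(n_1,\dots,n_m)$ for which $\psi(n_1,\dots,n_m)$ holds. To set this up I first fix the reading direction (say, most-significant-digit first), enrich the structure by a unary predicate $P_a$ for each letter $a$ in the alphabet of $\bf s$, meaning ``${\bf s}[x]=a$'', and record the elementary normalization facts that the language of valid representations is regular and is closed under inserting or deleting leading zeros; together with the corresponding addition-recognizability facts, this is exactly what a \emph{regular numeration system} provides, and for base $b$, Fibonacci, Tribonacci, and Ostrowski systems these are the contents of the works cited above. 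I will also arrange that every $A_\psi$ produced in the induction is itself closed under extra leading zeros on its input, which makes the quantifier step clean.

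For the base cases I would handle atomic formulas. Equalities and the addition relation --- $x=y$, $x=0$, $x=1$, $x+y=z$ --- are recognizable by definition of a regular numeration system, and from these one builds by a product construction an automaton for any fixed linear relation, such as $x_1+\cdots+x_k = y + c$, obtained after flattening terms; intersecting on each track with the ``valid representation'' automaton and with the leading-zero closure handles the padding conventions. For the sequence predicates, since $\bf s$ is automatic there is a deterministic finite automaton with output that computes ${\bf s}[n]$ from the representation of $n$; hence for each letter $a$ the set $\{\,n : {\bf s}[n]=a\,\}$ is regular, and an atomic formula $P_a(t)$ on a term $t$ is obtained by composing this automaton with the arithmetic automaton that computes $t$. (If one prefers atomic formulas of the form ${\bf s}[t_1]={\bf s}[t_2]$, run two synchronized copies of the output automaton and accept when the outputs agree.)

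For the inductive step, the Boolean connectives are handled by closure of regular languages under union, intersection, and complement: before combining $A_{\psi_1}$ and $A_{\psi_2}$ I cylindrify both to the common list of free variables by inserting dummy tracks ranging over all valid representations, then take the appropriate product; negation is complementation of a deterministic automaton. The existential quantifier $\exists x_i\,\psi$ is handled by erasing the $i$-th track, which turns $A_\psi$ into a nondeterministic automaton that I then determinize by the subset construction; here the one delicate point is that erasing a track can create or remove leading zeros on the remaining tracks, which is precisely why I insisted on the leading-zero closure invariant, so that the projected automaton still recognizes exactly the padded representations of the projected set. The universal quantifier is treated as $\neg\exists\neg$. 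This completes the induction, and in particular establishes part~(b): the automaton asserted there is literally the output of the construction applied to $\varphi$.

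Finally, for a sentence $\varphi$ with no free variables the construction yields an automaton over a one-track alphabet that, in effect, either accepts the empty input or does not, and $\varphi$ is true precisely when this automaton accepts --- a decidable emptiness/acceptance test --- which gives part~(a). I expect the main obstacle to be not any single clever idea but the careful bookkeeping around variable-length representations: choosing a consistent padding convention, verifying the leading-zero closure and addition-recognizability lemmas for the numeration system at hand, and checking that cylindrification and projection interact correctly with the padding (the subset construction inside nested existential quantifiers is also what makes the procedure non-elementary, though this does not affect decidability). In base $b$ this is the classical B\"uchi-style argument; for the Fibonacci, Tribonacci, and Ostrowski systems the requisite properties are exactly what is established in the cited works, after which the rest of the induction goes through verbatim.
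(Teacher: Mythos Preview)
Your proof sketch is correct and follows the classical B\"uchi-style automata-theoretic argument. Note, however, that the paper does not actually prove Theorem~\ref{thm1}: it is stated as a known result with a citation to \cite{Charlier&Rampersad&Shallit:2012} (and implicitly to \cite{Bruyere&Hansel&Michaux&Villemaire:1994}), so there is no ``paper's own proof'' to compare against --- your outline is essentially the argument those references carry out.
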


We are now ready to prove
\begin{theorem}
\leavevmode
\begin{itemize}
\item[(a)]
It is decidable, given an automatic
sequence $\bf s$ and a constant $c$,
whether all prefixes of $\bf s$ have a string attractor of size
at most $c$.   
\item[(b)] Furthermore, if this is the case, we can construct
an automaton that, for each $n$, provides the elements of a
minimal string attractor for the length-$n$ prefix of $\bf s$.
\end{itemize}
\end{theorem}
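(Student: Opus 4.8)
The plan is to express the property ``$S$ is a string attractor for ${\bf s}[0..n-1]$'' as a first-order formula in the logical structure $\theory(\Enn, +, 0, 1, n \rightarrow {\bf s}[n])$, so that Theorem~\ref{thm1} can be applied. Recall the definition: $S \subseteq \{0,1,\ldots,n-1\}$ is a string attractor for ${\bf w} = {\bf s}[0..n-1]$ if every nonempty factor $f$ of $\bf w$ has an occurrence inside $\bf w$ that touches some index of $S$. Rather than quantify over the set $S$ directly (which we cannot do in first-order logic), we fix the size bound $c$ and encode $S$ by its $c$ elements $p_1, \ldots, p_c$, allowing repeats so that attractors of size $< c$ are also captured. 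A factor of $\bf w$ is specified by a starting position $i$ and a length $\ell$ with $i + \ell \leq n$; we abbreviate its occurrence at position $j$ by the formula
\[
\faceq(i,j,\ell) \;:=\; \forall t \, \bigl( t < \ell \rightarrow {\bf s}[i+t] = {\bf s}[j+t]\bigr),
\]
which asserts ${\bf s}[i..i+\ell-1] = {\bf s}[j..j+\ell-1]$.

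Next I would assemble the main formula. The condition ``the factor ${\bf s}[i..i+\ell-1]$ has an occurrence in $\bf w$ touching some $p_k$'' becomes
\[
\bigvee_{k=1}^{c} \exists j \, \bigl( j + \ell \leq n \ \wedge\ j \leq p_k \ \wedge\ p_k < j + \ell \ \wedge\ \faceq(i,j,\ell)\bigr),
\]
and then ``$\{p_1,\ldots,p_c\}$ is a string attractor for ${\bf s}[0..n-1]$'' is
\[
\appear(n,p_1,\ldots,p_c) \;:=\; \Bigl(\bigwedge_{k} p_k < n\Bigr) \wedge \forall i \, \forall \ell \, \Bigl( (1 \leq \ell \ \wedge\ i + \ell \leq n) \rightarrow \bigvee_{k} \exists j \, (\cdots) \Bigr).
\]
Finally, ``every prefix of $\bf s$ has a string attractor of size $\leq c$'' is the closed formula $\forall n \, \exists p_1 \cdots \exists p_c \ \appear(n,p_1,\ldots,p_c)$. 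By Theorem~\ref{thm1}(a) its truth is decidable, giving part~(a). For part~(b), assuming the closed formula holds, apply Theorem~\ref{thm1}(b) to the formula $\appear(n,p_1,\ldots,p_c)$ with $n,p_1,\ldots,p_c$ free: this yields an automaton recognizing all valid $(n,p_1,\ldots,p_c)$; to pick out a \emph{minimal} attractor one further restricts to tuples where no smaller set works, e.g.\ intersecting with the complement of $\exists q_1 \cdots \exists q_{c} (\appear(n,q_1,\ldots,q_c) \wedge \text{``fewer distinct values''})$, or more simply by lexicographically least choice, both of which are again first-order definable, and then projecting to obtain, for each $n$, the attractor elements.

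The main obstacle is not conceptual but one of care in the encoding: one must make sure the ``touches index $p_k$'' condition is stated correctly (the occurrence at $j$ covers positions $j, j+1, \ldots, j+\ell-1$, so it touches $p_k$ iff $j \leq p_k \leq j+\ell-1$), and that the quantification over factors via $(i,\ell)$ ranges exactly over nonempty factors of the length-$n$ prefix. One should also confirm that all the numeration-system predicates used -- addition, comparison, the bound $i+\ell\le n$ -- are available in a regular numeration system, which holds by definition of automatic sequence and the remarks in Section~\ref{basics}. Once the formula is written down correctly, decidability and the construction of the output automaton are immediate from Theorem~\ref{thm1}; there is no quantitative estimate to carry out here, since $c$ is a fixed constant and the whole property is a single closed formula of bounded size.
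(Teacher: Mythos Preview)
Your proposal is correct and follows essentially the same approach as the paper: encode ``$\{p_1,\ldots,p_c\}$ is a string attractor for ${\bf s}[0..n-1]$'' as a first-order formula over $(\Enn,+,0,1,n\mapsto{\bf s}[n])$ with $c$ free position variables (allowing repeats to capture sizes $\le c$), universally quantify over $n$ and existentially over the $p_k$, and invoke Theorem~\ref{thm1}. The only cosmetic difference is that you parameterize factors by (start, length) while the paper uses (start, end), and for part~(b) the paper simply takes the lexicographically first accepted tuple, which you also mention as one option.
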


\begin{proof}
\leavevmode
\begin{itemize}
\item[(a)]
From Theorem~\ref{thm1},
it suffices to create a first-order formula $\varphi$
asserting that the length-$n$ prefix of $\bf s$ has a string
attractor of size at most $c$.   We construct $\varphi$ in
several stages.

First, we need a formula asserting that the
factors ${\bf s}[i..j]$ and ${\bf s}[k+i-j..k]$ coincide.
We can do this as follows:
$$\faceq(i,j,k) := \forall u,v\ (u \geq i \wedge u\leq j \wedge v+j=u+k) 
\implies {\bf s}[u]={\bf s}[v]. $$

Next, let us create a formula asserting that
$\{ i_1, i_2, i_3, \ldots, i_c \}$ is
a string attractor for the length-$n$ prefix of
$\bf s$.  We can do this as follows:
\begin{align*}
\sa(i_1, i_2, \ldots, i_c, n)
& := (i_1<n) \wedge (i_2<n) \wedge \cdots \ \wedge (i_c<n) \\
& \forall k,l\ (k\leq l \wedge l<n) \implies \\
& (\exists r,s \ r\leq s \wedge s<n \wedge (s+k=r+l) \wedge \faceq(k,l,s)  \\
& \wedge ((r \leq i_1 \wedge i_1\leq s) \vee (r\leq i_2 \wedge i_2 \leq s)
	\vee \cdots \vee (r\leq i_c \wedge i_c \leq s))).
\end{align*}
Notice here that we do not demand that the $i_j$ be distinct, which explains
why this formula checks that the string attractor size is $\leq c$ and not
equal to $c$.

Finally, we can create a formula with no free variables asserting that
every prefix of $\bf s$ has a string attractor of cardinality
$\leq c$ as follows:
$$ \forall n \ \exists i_1, i_2, \ldots, i_c \ \sa(i_1,i_2, \ldots, i_c, n).$$

\item[(b)]
The algorithm in \cite{Charlier&Rampersad&Shallit:2012}
(which is essentially described in
\cite{Bruyere&Hansel&Michaux&Villemaire:1994})
constructs an automaton for $\sa$, which
recognizes $\sa(i_1, i_2, \ldots, i_c, n)$.   It is now easy to
find the lexicographically first set of indices $i_1, \ldots, i_c$
corresponding to any given $n$.
\end{itemize}
\end{proof}

One advantage to this approach is that it lets us (at least in principle)
compute the size of the smallest string attractor for {\it all\/} prefixes
of an automatic word, not just the ones of (for example) length $2^n$
(as in the case of the Thue-Morse words).  
Notice that knowing $\gamma$ on prefixes of length $2^n$ of
the Thue-Morse word $\bf t$ doesn't immediately give $\gamma$ on
all prefixes, since $\gamma$ need not be monotone increasing on prefixes.

For example, here is the size $s_n$ of the
smallest string attractor for length-$n$ prefixes of the
Thue-Morse infinite word $\bf t$, $1 \leq n \leq 32$:
\begin{table}[H]
\begin{center}
\begin{tabular}{c|cccccccccccccccc}
\hline
$n$ & 1& 2& 3& 4& 5& 6& 7& 8& 9&10&11&12&13&14&15&16\\
$s_n$ & 1& 2& 2& 2& 2& 2& 3& 3& 3& 3& 3& 3& 3& 3& 4& 4 \\
\hline
$n$ & 17&18&19&20&21&22&23&24&25&26&27&28&29&30&31&32\\
$s_n$ & 3& 3& 3& 3& 3& 3& 3& 3& 4& 4& 4& 4& 4& 4& 4& 4\\ 
\hline
\end{tabular}
\end{center}
\end{table}
\noindent Note that $\gamma({\bf t}[0..15]) = 4$, while
$\gamma({\bf t}[0..16]) = 3$.  We will obtain a closed form
for $\gamma({\bf t}[0..n-1])$ in Section~\ref{tm}.

\section{The period-doubling sequence}

Now let's apply these ideas to a famous 
$2$-automatic infinite word, the period-doubling sequence 
${\bf pd} = {\tt 101110101011} \cdots$
\cite{Guckenheimer:1977,Bellissard&Bovier&Ghez:1991,Damanik:2000}. It is the
fixed point of the morphism $1 \rightarrow 10$, $0 \rightarrow 11$.

We will use the theorem-proving software {\tt Walnut} \cite{Mousavi:2016}
created by Hamoon Mousavi; it implements the decision procedure
alluded to in Theorem~\ref{thm1}.   {\tt Walnut} allows the user
to enter the logical formulas we created above and determine the
results.   

Here is the {\tt Walnut} translation of our logical formulas.
In {\tt Walnut} the period-doubling sequence
is represented by the automaton named $\tt PD$.  The universal
quantifier $\forall$ is represented by {\tt A}, and the existential
quantifier $\exists$ is represented by {\tt E}.
The macro {\tt pdfaceq} tests whether
${\bf pd}[i..j] = {\bf pd}[k-j+i..k]$.
The rest of the
translation into {\tt Walnut} is probably self-explanatory.
\begin{verbatim}
def pdfaceq "k+i>=j & A u,v (u>=i & u<=j & v+j=u+k) => PD[u]=PD[v]":

def pdsa2 "(i1<n) & (i2<n) & Ak,l (k<=l & l<n) => (Er,s r<=s & s<n &
(s+k=r+l) & $pdfaceq(k,l,s) & ((r<=i1 & i1<=s) | (r<=i2 & i2<=s)))":

def pdfa1 "An (n>=2) => Ei1,i2 $pdsa2(i1,i2,n)":

def pdfa2 "$pdsa2(i1,i2,n) & i1 < i2":

def pdfa3 "$pdfa2(i1,i2,n) & (Ai3, i4 $pdfa2(i3,i4,n) => i4 >= i2)":

def pdfa4 "$pdfa3(i1,i2,n) & (Ai3 $pdfa3(i3,i2,n) => i3>=i1)":
\end{verbatim}

Here the {\tt Walnut} command {\tt pdfa1} returns {\tt true},
which confirms that every length-$n$
prefix of the period-doubling sequence, for $n \geq 2$,
has a string attractor of size at most $2$.  Since every word
containing two distinct letters requires a string attractor of
size at least $2$, we have now proved
\begin{theorem}
The minimum string attractor for ${\bf pd}[0..n-1]$ is $2$
for $n \geq 2$.
\end{theorem}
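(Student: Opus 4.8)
The plan is to combine an immediate lower bound with an upper bound that follows from the decision procedure. For the lower bound: the word ${\bf pd}$ begins {\tt 10}, so for every $n \ge 2$ the prefix ${\bf pd}[0..n-1]$ contains both letters {\tt 0} and {\tt 1}. If a singleton $\{i\}$ were a string attractor, then every nonempty factor, and in particular both length-one factors {\tt 0} and {\tt 1}, would need an occurrence touching position $i$ --- impossible, since a single position carries a single letter. Hence $\gamma({\bf pd}[0..n-1]) \ge 2$ for all $n \ge 2$, and it remains only to prove the matching upper bound.

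For the upper bound I would specialize the construction of the preceding theorem to $c = 2$. First build $\faceq(i,j,k)$ expressing that the equal-length factors ${\bf pd}[i..j]$ and ${\bf pd}[k+i-j..k]$ coincide; then build $\sa(i_1,i_2,n)$ expressing that $\{i_1,i_2\}$ is a string attractor for ${\bf pd}[0..n-1]$, namely that every factor ${\bf pd}[k..l]$ with $l < n$ has some occurrence ${\bf pd}[r..s]$ with $r \le s < n$, $s - r = l - k$, $\faceq(k,l,s)$, and $r \le i_t \le s$ for some $t \in \{1,2\}$; and finally form the sentence
$$\forall n \ (n \ge 2) \implies \exists i_1, i_2 \ \sa(i_1,i_2,n),$$
which has no free variables. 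Since ${\bf pd}$ is $2$-automatic, Theorem~\ref{thm1}(a) applies and makes the truth of this sentence effectively decidable; running the decision procedure --- here the {\tt Walnut} command {\tt pdfa1} --- returns {\tt true}. Thus every prefix of length $\ge 2$ has a string attractor of size $\le 2$, and together with the lower bound we conclude $\gamma({\bf pd}[0..n-1]) = 2$ for $n \ge 2$. One can moreover invoke Theorem~\ref{thm1}(b) (the macros {\tt pdfa2}--{\tt pdfa4}) to produce an automaton that on input $n$ outputs the lexicographically least size-$2$ attractor, making the upper bound constructive.

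The step I expect to require the most care is checking that the first-order encoding is faithful, since everything else is either trivial or automatic. Concretely, one must confirm that $\faceq$ exactly expresses equality of the two named factors of equal length --- paying attention to the index shift $k+i-j$ and to the degenerate (empty-factor) case --- and that $\sa$ exactly expresses the definition of string attractor, correctly matching the "factor" interval $[k,l]$ with an "occurrence" interval $[r,s]$ of the same length inside $[0,n-1]$ and rendering "the occurrence touches an index of $S$" as $r \le i_t \le s$. Once this faithfulness is in hand, the theorem reduces to a finite automaton computation, which for the period-doubling sequence is small enough to carry out directly. An alternative, purely combinatorial route would be to exhibit an explicit size-$2$ attractor for ${\bf pd}[0..n-1]$ as a function of the position of $n$ relative to the powers of $2$ and verify its defining property by hand, but the decision-procedure argument is shorter and far less error-prone.
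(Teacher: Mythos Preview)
Your proposal is correct and follows essentially the same approach as the paper: the lower bound via the two distinct letters in any prefix of length $\geq 2$, and the upper bound via the first-order formula $\sa(i_1,i_2,n)$ verified by the {\tt Walnut} command {\tt pdfa1}. Your added remarks on the faithfulness of the encoding and on the constructive macros {\tt pdfa2}--{\tt pdfa4} mirror the paper's development in the surrounding text.
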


Achieving this was a nontrivial computation in {\tt Walnut}.
Approximately 45 gigabytes of storage and 60 minutes of CPU
time were required.
The number of states in each automaton, and the time
in milliseconds required to compute it, are given as follows:
\begin{table}[H]
\begin{center}
\begin{tabular}{l|r|l}
automaton & number & time \\
name & of states & in ms \\
\hline
{\tt pdfaceq} & 19 & 141 \\
{\tt pdsa2} & 20 & 3654862 \\
{\tt pdfa1} & 1 & 229 \\
{\tt pdfa2} & 14 & 2 \\
{\tt pdfa3} & 5 & 6 \\
{\tt pdfa4} & 5 & 3
\end{tabular}
\end{center}
\end{table}

Here is the automaton produced for {\tt pdfa4}.  It takes as
input the binary representation of triples of natural numbers,
in parallel.  If it accepts $(i_1, i_2, n)$, then
$\{ i_1, i_2 \}$ is a string attractor of size $2$ 
for the length-$n$ prefix of $\bf pd$.   
\begin{center}
\includegraphics[width=6in]{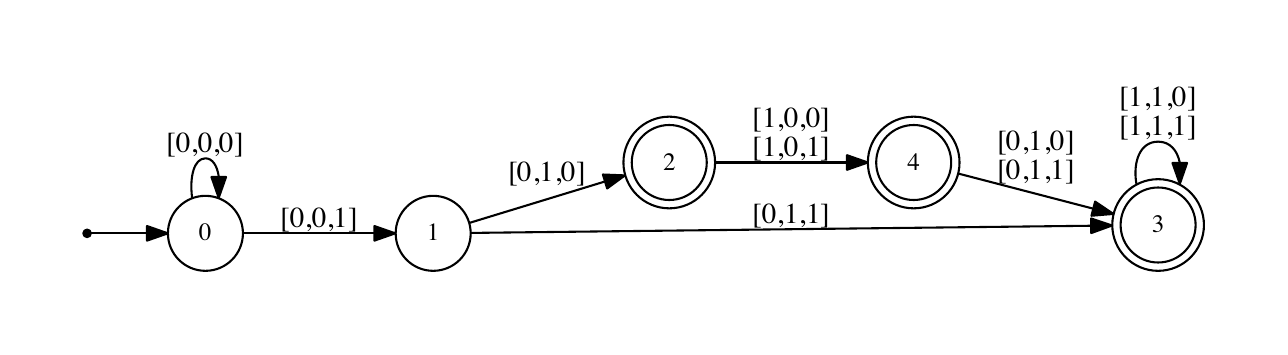}
\end{center}
For example, consider the prefix ${\bf pd}[0..25]$ of length $26$:
$10111010101110111011101010$.  In the automaton we see that
the base-$2$ representation of
$(7,15,26)$ is accepted, in parallel, so one string attractor is
$\{ 7, 15\}$:
$$1011101\underline{0}1011101\underline{1}1011101010.$$

By inspection of this automaton we easily get
\begin{theorem}
Let $n \geq 6$.   Then a string attractor for
${\bf pd}[0..n-1]$ is given by
$$
\begin{cases}
\{ 3 \cdot 2^{i-3} - 1, 3\cdot 2^{i-2} - 1 \}, & \text{if 
	$2^i \leq n < 3\cdot 2^i$;} \\
\{ 2^i -1, 2^{i+1} - 1 \}, & \text{if 
	$3 \cdot 2^i \leq n < 2^{i+1}$.}
\end{cases}
$$
\end{theorem}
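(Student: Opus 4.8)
The statement to prove gives an explicit formula, depending on the dyadic range containing $n$, for a size-$2$ string attractor of the prefix ${\bf pd}[0..n-1]$. The phrase ``by inspection of this automaton'' suggests the intended proof is to read the claimed pairs directly off the accepting paths of the automaton for {\tt pdfa4}; my plan is to give a self-contained argument that does not rely on visual inspection, but that recovers the same formula.

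The plan is to verify the two cases by showing that each claimed pair $\{i_1,i_2\}$ is accepted by {\tt pdfa4} --- equivalently, that it satisfies $\sa(i_1,i_2,n)$ with $i_1<i_2$ and is lexicographically minimal among such pairs --- and the cleanest way to do the first part rigorously is to appeal to Theorem~\ref{thm1}(b). Concretely, I would write a first-order formula $\psi(n)$ asserting that for all $n\ge 6$, the pair predicted by the case split satisfies $\sa$; since the dyadic conditions ``$2^i\le n<3\cdot 2^i$'' and the arithmetic expressions $3\cdot 2^{i-3}-1$, $2^i-1$, etc., are all definable in $\theory(\Enn,+,0,1,n\to{\bf pd}[n])$ (powers of two are a regular language in base $2$, and multiplication by the fixed constants $3,6,8$ is just repeated addition), $\psi$ is expressible, and we can let {\tt Walnut} check that it is true. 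This reduces the theorem to a finite computation of the same flavour as the ones already carried out in the paper. To also confirm these are the lexicographically least such pairs (matching the output of {\tt pdfa4}), I would similarly check the formula $\forall n\,(n\ge 6)\implies {\tt pdfa4}(f_1(n),f_2(n),n)$, where $f_1,f_2$ are the definable functions giving the claimed indices.

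Alternatively, for a hand proof, I would argue directly using the self-similar structure of ${\bf pd}$. Recall ${\bf pd}$ is the fixed point of $1\to 10$, $0\to 11$, so ${\bf pd}[0..2^i-1]$ is the $i$-th iterate image of $1$, and ${\bf pd}$ has the well-known property that ${\bf pd}[0..2^{i}-1]$ determines ${\bf pd}[0..2^{i+1}-1]$ as $w\,\overline{w}$ up to a single flipped bit; more usefully, every factor of ${\bf pd}$ of length $\le 2^{i-1}$ occurs inside ${\bf pd}[0..2^i-1]$, in fact straddling a predictable position. The two index choices $2^i-1$ and $3\cdot 2^{i-1}-1$ (resp. their shifted versions) are precisely the ``boundary'' positions between blocks $\mu^{i}(\cdot)$ and $\mu^{i-1}(\cdot)$ in the prefix of length $n$, and the key lemma is: every factor $f$ of ${\bf pd}[0..n-1]$ has an occurrence touching one of these two block boundaries. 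One proves this by induction on $i$: a factor of length $> 2^{i-2}$ must cross one of the two boundaries wherever it sits (a counting/pigeonhole argument on block lengths), and a factor of length $\le 2^{i-2}$ can be ``pulled back'' to an occurrence inside the length-$2^{i-1}$ prefix, where the inductive hypothesis places it on a boundary that is itself aligned (after applying $\mu$) with one of our two positions.

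The main obstacle, in the hand-proof route, is the bookkeeping in the induction: tracking exactly where the block boundaries fall for each dyadic range of $n$, and verifying that the ``pull-back'' of a short factor to an earlier prefix indeed lands on (a scaled copy of) one of the two chosen positions rather than some other boundary --- the case split in the theorem statement ($2^i\le n<3\cdot 2^i$ versus $3\cdot 2^i\le n<2^{i+1}$) is exactly the artifact of this boundary-tracking. In the {\tt Walnut} route there is no real mathematical obstacle, only the practical one already noted in the paper: the automaton for $\sa$ (here {\tt pdsa2}) is expensive to build, so one must reuse the already-computed {\tt pdfaceq}/{\tt pdsa2}/{\tt pdfa4} automata rather than recomputing them. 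I would present the {\tt Walnut} verification as the proof, since it is rigorous and matches the paper's methodology, and remark that the closed form can also be read off the displayed automaton by inspection.
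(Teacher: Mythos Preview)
Your proposal is correct. The paper's own proof is exactly the one-liner you identified: the authors simply read the closed form off the five-state automaton for {\tt pdfa4} that was already computed and displayed, with no further argument. Your two routes are genuine alternatives. The {\tt Walnut} verification you outline---encoding the claimed attractor as a definable function of $n$ (via the {\tt power2} predicate and constant multiplication) and then checking that $\sa$ holds---is precisely the strategy the paper adopts later for Thue-Morse, Tribonacci, and {\bf vtm}, where direct automaton inspection is infeasible; it is more laborious here than the paper's inspection but has the virtue of being a self-contained formal certificate rather than a picture-reading. Your hand-proof sketch via the block structure of the fixed point is in a different spirit altogether and would give genuine combinatorial insight the automaton route does not, though as you note the bookkeeping (tracking which boundaries survive the case split on $n$) is where the work lies and you have not carried it through. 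Either route is sound; the paper simply took the shortest path available once the automaton was in hand.
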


\section{Back to Thue-Morse}
\label{tm}

We would like to perform the same kind of analysis for the Thue-Morse sequence, finding the size of the minimal string attractor for each of
its finite prefixes.  Although clearly achievable in theory,
unfortunately, a direct translation of what we
did for the period-doubling sequence fails to run to completion on
{\tt Walnut} within reasonable time and space bounds.  So we use
a different approach.

\begin{theorem}
Let $a_n$ denote the size of the smallest
string attractor for the length-$n$ prefix of the Thue-Morse
word $\bf t$.  Then 
$$
a_n = \begin{cases}
	1, & \text{if $n = 1$}; \\
	2, & \text{if $2 \leq n \leq 6$}; \\
	3, & \text{if $7 \leq n \leq 14$ or $17 \leq n \leq 24$}; \\
	4, & \text{if $n = 15, 16$ or $n \geq 25$}.
	\end{cases}
$$
\label{thue}
\end{theorem}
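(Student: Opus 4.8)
The plan is to pin down $a_n$ in each of the four ranges by proving matching upper and lower bounds. A direct attack --- asking Walnut, for each small $c$, to produce a minimal string attractor of size $\le c$ for every prefix --- is out of reach here because the intermediate automata blow up; moreover one cannot simply work with the blocks $\mu^k(0) = \mathbf{t}[0..2^k-1]$ and transfer the answer to other lengths, since attractor size is not monotone under taking prefixes (already $a_{15}=4$ while $a_{17}=3$), so a size-$4$ attractor for a long block need not restrict to one for a shorter prefix. Instead we handle the upper and lower bounds separately.

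\emph{Upper bounds.} For every $n$ I would exhibit an explicit set $S(n)$ of at most the claimed size and verify that it is a string attractor for $\mathbf{t}[0..n-1]$. A reasonable guess for $S(n)$ --- suggested both by the automaton one gets in the period-doubling case and by the known size-$4$ attractors for $\mu^k(0)$ \cite{Kutsukake:2020} --- is a bounded set of indices clustered near the midpoint and the right end of the prefix, with the precise offsets chosen piecewise according to where $n$ lies relative to the nearest powers of two. Since ``$x$ is a power of $2$'' is first-order definable in base $2$, each such $S(n)$ is a first-order definable function of $n$, so for a fixed range $R$ the assertion that $\{s_1(n),\dots,s_k(n)\}$ is an attractor for all $n \in R$, i.e.\ $\forall n\,(n\in R \implies \sa(s_1(n),\dots,s_k(n),n))$, is a closed formula that Walnut decides. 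Carrying this out for $R = \{2,\dots,6\}$ with $k=2$, for $R = \{7,\dots,14\}\cup\{17,\dots,24\}$ with $k=3$, and for $R=\{15,16\}\cup\{n\ge 25\}$ with $k=4$, together with the trivial case $a_1=1$, gives all the upper bounds. The crucial point is that \emph{checking} a definable candidate is cheap, whereas \emph{searching} for a minimal attractor of unknown size is exactly what overflows.

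\emph{Lower bounds.} That $a_n \ge 2$ for $n\ge 2$ is immediate, since $\mathbf{t}[0..n-1]$ contains both letters. For the remaining bounds I would first try the direct route: have Walnut build, for $c=2$ and $c=3$, the automaton recognizing the set $\{\, n : \exists i_1,\dots,i_c\ \sa(i_1,\dots,i_c,n)\,\}$, which equals $\{\, n : a_n \le c\,\}$; reading off the complements of these sets in the relevant windows yields $a_n \ge 3$ on $7 \le n \le 24$ and $a_n \ge 4$ on $\{15,16\}$ and on $n \ge 25$. Should the $c=3$ automaton prove too large, the fallback for large $n$ is the inequality $\gamma(w) \ge \delta(w)$, where $\delta(w) = \max_{\ell} p_w(\ell)/\ell$ counts distinct factors per length (see \cite{Kociumaka:2020}): since the Thue-Morse complexity ratio $p(\ell)/\ell$ exceeds $3$ for infinitely many $\ell$ and every length-$\ell$ factor of $\mathbf{t}$ already appears in a prefix of length $O(\ell)$, we get $a_n \ge 4$ for all sufficiently large $n$, leaving only finitely many $n$ to be dispatched by a targeted Walnut query.

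\emph{Main obstacle.} The delicate point is the local bump $a_{15}=a_{16}=4$, sitting strictly above both neighbours $a_{14}=a_{17}=3$: here the substring-complexity bound is provably too weak, since one computes $\delta(\mathbf{t}[0..14]) = 5/2$, which gives only $a_{15} \ge 3$. So the sharp bounds $a_{15}\ge 4$ and $a_{16}\ge 4$ must come from a dedicated computation certifying that no three positions attract those two prefixes. The real work is therefore (i) finding attractor-index formulas that are uniform enough across each range to be machine-checkable, and (ii) getting the ``no attractor of size $c-1$'' checks to terminate on the handful of awkward lengths; once these are in hand, assembling the stated piecewise formula is routine.
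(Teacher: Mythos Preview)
Your plan is correct and lands on essentially the paper's proof: explicit candidate attractors (parametrized by the nearest power of~$2$) verified with Walnut for the upper bound, and $\gamma(w)\ge\delta(w)$ plus a finite brute-force check for the lower bound. Two remarks on emphasis: (i) the paper does not bother exhibiting size-$2$ or size-$3$ attractors on the small ranges---it simply gives a uniform size-$4$ attractor for all $n\ge 12$ (via three piecewise formulas covering $[13\cdot 2^{i-2}-1,\,5\cdot 2^i]$, etc.) and brute-forces $n\le 59$, which already subsumes the upper bounds on your finite windows; (ii) your primary lower-bound plan, building the automaton for $\{\,n:\exists i_1,\dots,i_c\ \sa(i_1,\dots,i_c,n)\,\}$ with $c=3$, is exactly the computation the paper reports as infeasible for Thue--Morse, so your ``fallback'' is in fact the route taken---the paper uses $40$ factors of length $13$ in $\mathbf{t}[0..59]$ to get $\delta>3$ for all $n\ge 60$, and handles $n\le 59$ (including the awkward $n=15,16$) by direct enumeration.
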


\begin{proof}
For $n \leq 59$ we can verify the values of $a_n = \gamma({\bf t}[0..n-1])$
with a machine computation.

Define $\rho_n (w)$ to be the number of distinct length-$n$ factors
of the word $w$, and set $\delta(w) = \max_{1 \leq i \leq n} \rho_i (w)/i$.
Christiansen et al.~\cite[Lemma 5.6]{Christiansen:2019} proved
that $\delta(w) \leq \gamma(w)$ for all words $w$.

Next, we observe that ${\bf t}[0..59]$,
the prefix of $\bf t$ of length $60$,
contains 40 factors of length $13$.   
Thus $\gamma(w) \geq \delta(w) \geq 40/13 > 3$ for all prefixes $w$
of $\bf t$ of length $\geq 60$.  

Next, we need to see that $\gamma(w) \leq 4$ for all prefixes $w$
of $\bf t$.   For prefixes of length $<12$ this can be verified
by a computation.   Otherwise, we claim that 
\begin{itemize}
\item[(a)]
$\{ 2^i - 1, 3\cdot 2^{i-1} - 1, 2^{i+1} - 1, 3\cdot 2^i - 1\} $
is a string attractor for ${\bf t}[0..n-1]$ 
for $13\cdot 2^{i-2} - 1 \leq n \leq 5 \cdot 2^i$ and $i \geq 2$;

\item[(b)]
$ \{ 3\cdot2^{i-1} -1, 2^{i+1} - 1, 3\cdot 2^i - 1, 2^{i+2} - 1\} $
 is a string attractor for ${\bf t}[0..n-1]$
for $9 \cdot 2^{i-1} - 1 \leq n \leq 3 \cdot 2^{i+1}$ and $i \geq 1$; 

\item[(c)]
$\{ 3 \cdot 2^{i-1} -1, 2^{i+1} -1, 2^{i+2} - 1, 5 \cdot 2^i - 1\}$
 is a string attractor for ${\bf t}[0..n-1]$
for $3 \cdot 2^{i+1} -1 \leq n \leq 13 \cdot 2^{i-1}$ and $i \geq 1$. \\
\end{itemize}
As can easily be checked, these intervals cover all $n \geq 12$.

To verify these three claims, we use {\tt Walnut} again.  
Here are the ideas, discussed in detail for the first of the three
claims.  Since we cannot express $2^i$ directly in {\tt Walnut},
instead we define a variable $x$ and demand that the base-$2$ representation
of $x$ be of the form $1 0^*$.  We can do this with the command
\begin{verbatim}
reg power2 msd_2 "0*10*":
\end{verbatim}
which defines a macro {\tt power2}.

Next, we define the analogue of {\tt pdfaceq} above; it tests
whether ${\bf t}[k..l] = {\bf t}[q-l+k..q]$.
\begin{verbatim}
def tmfaceq "(q+k>=l) & At Au (t>=k & t<=l & l+u=q+t) => T[t]=T[u]":
\end{verbatim}

Next we define four macros,
{\tt testa1, testa2, testa3, testa4} with
arguments $k,l,n,x$.
These test, respectively, if the factor
${\bf t}[k..l]$ has an occurrence in ${\bf t}[0..n-1]$
as ${\bf t}[p..q]$ such that 
\begin{enumerate}
\item $x - 1 \in [p..q]$; 
\item $3x/2 - 1 \in [p..q]$;
\item $2x - 1 \in [p..q]$;
\item $3x  - 1 \in [p..q]$.
\end{enumerate} 
Here $x = 2^i$.

{\footnotesize
\begin{verbatim}
def testa1 "Ep,q (k+q=l+p) & (p<=q) & (q<n) & $tmfaceq(k,l,q) & (p+1<=x) & (q+1>=x)":
def testa2 "Ep,q (k+q=l+p) & (p<=q) & (q<n) & $tmfaceq(k,l,q) & (2*p+2<=3*x) & (2*q+2>=3*x)":
def testa3 "Ep,q (k+q=l+p) & (p<=q) & (q<n) & $tmfaceq(k,l,q) & (p+1<=2*x) & (q+1>=2*x)":
def testa4 "Ep,q (k+q=l+p) & (p<=q) & (q<n) & $tmfaceq(k,l,q) & (p+1<=3*x) & (q+1>=3*x)":
\end{verbatim}
}
Finally, we evaluate the formula {\tt checka}, which
asserts that for all $n, x, k, l$ with 
$n \geq 12$, $x = 2^i$, $i \geq 2$,
$13 \cdot 2^{i-2} - 1 \leq n \leq 5 \cdot 2^i$,
and $k \leq l < n$, the factor ${\bf t}[k..l]$
has an occurrence in ${\bf t}[0..n-1]$
matching one of the four possible
values of the string attractor.
\begin{verbatim}
eval checka "An Ax Ak Al
((n>=12) & $power2(x) & (x>=4) & (13*x<=4*n+4) & (n <=5*x) & (k<=l) & (l<n)) =>
($testa1(k,l,n,x) | $testa2(k,l,n,x) | $testa3(k,l,n,x) | $testa4(k,l,n,x))":
\end{verbatim}
Then {\tt Walnut} evaluates this as true, so 
claim (a) above holds.

Claims (b) and (c) can be verified similarly.  We provide the {\tt 
Walnut} code below:

{\footnotesize
\begin{verbatim}
def testb1 "Ep,q (k+q=l+p) & (p<=q) & (q<n) & $tmfaceq(k,l,q) & (2*p+2<=3*x) & (2*q+2>=3*x)":
def testb2 "Ep,q (k+q=l+p) & (p<=q) & (q<n) & $tmfaceq(k,l,q) & (p+1<=2*x) & (q+1>=2*x)":
def testb3 "Ep,q (k+q=l+p) & (p<=q) & (q<n) & $tmfaceq(k,l,q) & (p+1<=3*x) & (q+1>=3*x)":
def testb4 "Ep,q (k+q=l+p) & (p<=q) & (q<n) & $tmfaceq(k,l,q) & (p+1<=4*x) & (q+1>=4*x)":

eval checkb "An Ax Ak Al
((n>=12) & $power2(x) & (x>=2) & (9*x<=2*n+2) & (n<=6*x) & (k<=l) & (l<n)) =>
($testb1(k,l,n,x) | $testb2(k,l,n,x) | $testb3(k,l,n,x) | $testb4(k,l,n,x))":

def testc1 "Ep,q (k+q=l+p) & (p<=q) & (q<n) & $tmfaceq(k,l,q) & (2*p+2<=3*x) & (2*q+2>=3*x)":
def testc2 "Ep,q (k+q=l+p) & (p<=q) & (q<n) & $tmfaceq(k,l,q) & (p+1<=2*x) & (q+1>=2*x)":
def testc3 "Ep,q (k+q=l+p) & (p<=q) & (q<n) & $tmfaceq(k,l,q) & (p+1<=4*x) & (q+1>=4*x)":
def testc4 "Ep,q (k+q=l+p) & (p<=q) & (q<n) & $tmfaceq(k,l,q) & (p+1<=5*x) & (q+1>=5*x)":

eval checkc "An Ax Ak Al
((n>=12) & $power2(x) & (x>=2) & (6*x<=n+1) & (2*n<=13*x) & (k<=l) & (l<n)) =>
($testc1(k,l,n,x) | $testc2(k,l,n,x) | $testc3(k,l,n,x) | $testc4(k,l,n,x))":
\end{verbatim}
}

\end{proof}

The reason why this approach
succeeded (and ran much faster than for the period-doubling
sequence) is that we could find a good guess as to what the string attractors
look like, and then {\tt Walnut} can be used to verify our guess.

\section{Going even further}

In this last section, we sketch the proof of two additional results,
which can be achieved in the same way we proved the result for the
Thue-Morse word.

The first concerns the Tribonacci word ${\bf TR} = 0102010 \cdots$,
the fixed point of the morphism sending $0 \rightarrow 01$,
$1 \rightarrow 02$, $2 \rightarrow 0$, which was studied,
for example, in
\cite{Chekhova&Hubert&Messaoudi:2001,Barcucci&Belanger&Brlek:2004,Rosema&Tijdeman:2005}.
Define the Tribonacci 
numbers, as usual, to be
$T_0 = 0$, $T_1 = 1$, $T_2 = 1$, and $T_n = T_{n-1} + T_{n-2} + T_{n-3}$
for $n \geq 3$.  Furthermore, define
$W_n = T_n + T_{n-3} + T_{n-6} + \cdots + T_{2+((n-2) \bmod 3)}$ for
$n \geq 4$.  

\begin{theorem}
The size of the smallest string attractor for the length-$n$ prefix
of the Tribonacci word is $3$ for $n \geq 4$.  Furthermore,
for $i \geq 4$ and $W_i \leq n < W_{i+1}$ a smallest string attractor of
size $3$ for ${\bf TR}[0..n-1]$ is $\{ T_{i-2}-1, T_{i-1}-1, T_i-1 \}$.
\end{theorem}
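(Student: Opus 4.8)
The plan is to reuse the strategy that succeeded for the Thue--Morse word in Theorem~\ref{thue}: a short counting argument for the lower bound, a guess for the form of a size-$3$ string attractor, and a {\tt Walnut} verification of the guess, this time carried out in the Tribonacci numeration system.

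The lower bound is immediate. Any string attractor of a word $w$ must, for each letter $a$ occurring in $w$, contain some position $p$ with $w[p]=a$, because the singleton factor $a$ needs an occurrence touching the attractor and such an occurrence consists of a single position; hence $\gamma(w)$ is at least the number of distinct letters of $w$. Since ${\bf TR}[0..3] = 0102$ already uses all three letters, $\gamma({\bf TR}[0..n-1]) \geq 3$ for every $n \geq 4$. (One could equally invoke $\delta(w) \leq \gamma(w)$ exactly as in the proof of Theorem~\ref{thue}.)

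For the upper bound we verify with {\tt Walnut} that $\{T_{i-2}-1,\,T_{i-1}-1,\,T_i-1\}$ is a string attractor of ${\bf TR}[0..n-1]$ whenever $i \geq 4$ and $W_i \leq n < W_{i+1}$; since $W_4 = 4$ and the $W_i$ are strictly increasing, the ranges $W_i \le n < W_{i+1}$, $i \ge 4$, partition $\{\, n : n \geq 4 \,\}$, so this settles all $n \geq 4$ at once. In {\tt Walnut}, working over the Tribonacci numeration, one cannot name $T_i$, $W_i$, or the index $i$, so we encode them by their Tribonacci representations. The Tribonacci numbers are exactly the numbers whose representation matches $0^*10^*$, a regular language, so the property that $(x,y,z) = (T_{i-2},T_{i-1},T_i)$ for some $i \geq 4$ is captured by the first-order condition that $x,y,z$ are Tribonacci numbers with $1 \leq x < y < z$ and no Tribonacci number strictly between $x$ and $y$ or between $y$ and $z$. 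The values $W_i$ have an (eventually) periodic Tribonacci representation, essentially of the shape $1(001)^*$ up to a bounded suffix determined by $i \bmod 3$, so $\{W_i : i \geq 4\}$ is recognized by a small automaton, and ``$W_i \leq n < W_{i+1}$'' can be pinned down as ``the largest $W$-value that is $\leq n$ is the one whose leading digit occupies the same position as that of $z$''. With these ingredients one defines the Tribonacci analogue of $\faceq$, then three predicates asserting, respectively, that the factor ${\bf TR}[k..l]$ has an occurrence ${\bf TR}[p..q]$ inside ${\bf TR}[0..n-1]$ with $x-1 \in [p..q]$, with $y-1 \in [p..q]$, and with $z-1 \in [p..q]$, and finally evaluates the sentence asserting that for all $n,x,y,z$ subject to the constraints above and all $k \leq l < n$, at least one of the three predicates holds. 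If {\tt Walnut} returns {\tt true}, the upper bound follows, the claimed attractor is minimal by the lower bound, and its explicit form can be read off the automaton (and cross-checked against values of $\gamma({\bf TR}[0..n-1])$ computed directly for small $n$).

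I expect the main difficulty to be twofold. The first part is bookkeeping: getting the correspondence between the range $W_i \leq n < W_{i+1}$ and the triple $(T_{i-2},T_{i-1},T_i)$ exactly right, so that the boundary indices $i = 4, 5$ reproduce the hand-computed small values and the ranges provably tile $\{\, n : n \geq 4 \,\}$ with no gap or overlap; an off-by-one here quietly corrupts the statement. The second is feasibility: the Tribonacci adder and the automaton for ${\bf TR}$ are larger than their base-$2$ analogues, and the sentence carries several quantifier alternations wrapped around a universally quantified $\faceq$-subformula, so the intermediate automata could blow up. What makes this nevertheless tractable --- just as for Thue--Morse, and unlike the brute-force search that was feasible only for the period-doubling sequence --- is that we are merely asking {\tt Walnut} to \emph{verify} a fixed candidate family of attractors rather than to \emph{search} over all size-$3$ subsets of each prefix.
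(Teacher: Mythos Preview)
Your proposal is correct and follows essentially the same approach as the paper: the lower bound via the three-letter alphabet, and the upper bound by guessing the attractor family $\{T_{i-2}-1,T_{i-1}-1,T_i-1\}$ on the ranges $[W_i,W_{i+1})$ and verifying it in {\tt Walnut} over the Tribonacci numeration, encoding ``three consecutive Tribonacci numbers'' and ``the set $\{W_i\}$'' by regular expressions on representations. The only minor difference is how the $W$-range is tied to the triple $(x,y,z)$: you propose matching the leading-digit position of $z$ and the largest $W$-value $\leq n$, whereas the paper introduces two consecutive $W$-values $w_1,w_2$ and pins the triple to them by the linear inequalities $z\le w_1$ and $2z+x\ge w_2$; both encodings work, and the paper confirms that the resulting sentence evaluates to {\tt true} (in about 20 minutes and 125\,GB).
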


\begin{proof}
The proof proceeds almost exactly like the proof of Theorem~\ref{thue}.
Every prefix of length $n \geq 4$ contains the three distinct symbols
$0, 1, 2$, so clearly $\gamma({\bf TR}[0..n-1]) \geq 3$ for $n \geq 4$.

It remains to show the claim about the string attractor.  To do so
we create a {\tt Walnut} formula, given below.  We use the
so-called Tribonacci representation of numbers, described in
\cite{Bruckman:1989}.  It is easy to see that
the Tribonacci representation of $T_i$ is of the 
form $10^{i-2}$, and the Tribonacci representation of
$W_i$ is the length-$(i-1)$ prefix of the word $100100100 \cdots$.
We create regular expressions {\tt istrib} and {\tt t100} to match
these numbers.   The formula {\tt threetrib} accepts $x, y, z$
if $x = T_i, y = T_{i+1}, z = T_{i+2}$ for some $i \geq 2$.
The formula ${\tt adj}$ accepts $w_1, w_2$ if
$w_1 = W_i$, $w_2 = W_{i+1}$ for some $i \geq 4$.

{\footnotesize
\begin{verbatim}
reg istrib msd_trib "0*10*":
reg t100 msd_trib "0*(100)*100|0*(100)*1001|0*(100)*10010":

def tribfaceq "?msd_trib (q+k>=l) & At Au (t>=k & t<=l & l+u=q+t) => TR[t]=TR[u]":

def threetrib "?msd_trib $istrib(x) & $istrib(y) & $istrib(z) &
(x<y) & (y<z) & (Au (x<u & u<y) => ~$istrib(u)) & (Av (y<v & v<z) => ~$istrib(v))":

def adj "?msd_trib $t100(x) & $t100(y) & At (t>x & t<y) => ~$t100(t)":

def testtr "?msd_trib Ep,q (k+q=l+p) & (p<=q) & (q<n) & $tribfaceq(k,l,q) & (p<=w) & (q>=w)":

eval checktrib "?msd_trib An,x,y,z,w1,w2,k,l ((n>=4) & $threetrib(x,y,z) &
(x>=1) & (n>=w1) & (n+1<=w2) & $adj(w1,w2) & (z<=w1) & (2*z+x>=w2) & (k<=l) 
& (l<n)) => ($testtr(k,l,n,x-1) | $testtr(k,l,n,y-1) | $testtr(k,l,n,z-1))":
\end{verbatim}
}

Since {\tt checktrib} returns {\tt true}, the result is proven.
This was a large computation in {\tt Walnut}, requiring 20 minutes of CPU
time and 125 Gigs of storage.
\end{proof}

We now turn to another word, a variant of the Thue-Morse word
sometimes called {\bf vtm} and studied by Berstel \cite{Berstel:1979}.
It is the fixed point of the map $2 \rightarrow 210$, 
$1 \rightarrow 20$, and $0 \rightarrow 1$.

\begin{theorem}
The size of the smallest string attractor for the length-$n$ prefix
of the ternary infinite word {\tt vtm} is 
$$
\begin{cases}
1, & \text{if $n=1$;} \\
2, & \text{if $n=2$;} \\
3, & \text{if $3 \leq n \leq 6$;} \\
4, & \text{if $n \geq 7$.}
\end{cases}
$$
\end{theorem}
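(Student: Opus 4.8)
The plan is to follow the proof of Theorem~\ref{thue} almost verbatim. The essential point is that {\tt vtm} is automatic: with the letters ordered $0,1,2$, the incidence matrix of the defining morphism is $M = \left(\begin{smallmatrix} 0 & 1 & 1 \\ 1 & 0 & 1 \\ 0 & 1 & 1 \end{smallmatrix}\right)$, whose characteristic polynomial $\lambda(\lambda-2)(\lambda+1)$ has dominant eigenvalue $2$; indeed {\tt vtm} is $2$-automatic, so the whole argument can be carried out in {\tt Walnut} over base-$2$ representations, just as for $\bf t$.

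First I would settle the small cases by a brute-force machine search: for all $n$ up to some explicit bound (a few dozen), compute $\gamma({\tt vtm}[0..n-1])$ directly. This verifies the first three lines of the claimed formula and confirms $\gamma = 4$ for the finitely many $n$ with $7 \le n$ below the bound. For the lower bound $\gamma({\tt vtm}[0..n-1]) \ge 4$ when $n \ge 7$, note first that for $3 \le n \le 6$ the prefix already contains all three letters, giving only $\gamma \ge 3$; to get past $3$ I would reuse the inequality $\delta(w) \le \gamma(w)$ of Christiansen et al.~\cite[Lemma 5.6]{Christiansen:2019}. A short computation shows that a fairly short prefix $w = {\tt vtm}[0..N-1]$ already has $\rho_i(w)/i > 3$ for a suitable $i$ (for instance {\tt vtm} has $10$ distinct factors of length $3$). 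Since $\rho_i$ is non-decreasing on prefixes, this gives $\gamma({\tt vtm}[0..n-1]) \ge \delta \ge \rho_i(w)/i > 3$ for every $n \ge N$, and the remaining range $7 \le n < N$ is covered by the machine search above.

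For the upper bound $\gamma({\tt vtm}[0..n-1]) \le 4$, I would guess a family of size-$4$ string attractors indexed by the powers of $2$: positions of the form $\{\, c_1 2^{i}-1,\, c_2 2^{i}-1,\, c_3 2^{i}-1,\, c_4 2^{i}-1 \,\}$ for suitable small rational constants $c_j$, valid on a family of intervals $[\, a\, 2^i,\, b\, 2^i \,)$ that together cover all $n \ge 7$, reading off the guess either from the output automaton of a direct {\tt Walnut} computation (as in the period-doubling section) or by inspecting the brute-force data. One then verifies the guess in {\tt Walnut} exactly as in Theorem~\ref{thue}: a regular expression {\tt power2} matching $0^*10^*$ in base $2$ to pin down $x = 2^i$, a macro {\tt vtmfaceq} testing equality of two factors of {\tt vtm}, macros {\tt test1}, {\tt test2}, {\tt test3}, {\tt test4} testing whether ${\tt vtm}[k..l]$ has an occurrence inside ${\tt vtm}[0..n-1]$ covering the respective candidate position, and a formula {\tt checkvtm} asserting that on each interval, for all $k \le l < n$, one of the four tests succeeds.

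The hard part, as for $\bf t$, is the guessing step: getting the constants $c_j$ and the interval decomposition right so that {\tt checkvtm} actually returns {\tt true}. Because the defining morphism of {\tt vtm} is non-uniform, the positions where new factors first appear are somewhat less regular than for the uniform morphisms behind $\bf t$ and $\bf pd$, so the decomposition may require more cases and the positions may not all have the clean form $c\cdot 2^i - 1$; everything, however, remains expressible through {\tt power2} together with linear arithmetic, so it stays within reach of {\tt Walnut}. Once the guess is correct the rest of the verification is routine, and the few ``transitional'' short lengths not covered by the family are handled by the brute-force step already mentioned.
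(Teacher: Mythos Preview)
Your proposal is correct and matches the paper's approach almost exactly: brute force for small $n$, the $\delta$-bound using the ten length-$3$ factors for the lower bound, and {\tt Walnut} verification of guessed size-$4$ attractors at positions of the form $c_j\cdot 2^i - 1$ on three overlapping interval families for the upper bound (so your worry that the positions might not have this clean form is unfounded). One caveat: a dominant eigenvalue of $2$ does not by itself imply $2$-automaticity for a non-uniform morphism, but {\tt vtm} is in fact $2$-automatic (and built into {\tt Walnut} as {\tt VTM}), so this does not affect the argument.
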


\begin{proof}
The result can easily be checked by a short computation for $n \leq 12$.

For a lower bound, observe that ${\bf vtm}[0..13]$ contains $10$ distinct
factors of length $3$, so $\gamma(w) \geq \delta(w) \geq 10/3 > 3$ for
all prefixes $w$ of $\bf vtm$ of length $\geq 14$.

We now claim that
\begin{itemize}
\item[(a)]
$\{ 2^i-1, 3\cdot 2^{i-1} - 1, 2^{i+1} - 1, 3 \cdot 2^i - 1 \}$ is a string
attractor for ${\bf vtm}[0..n-1]$ for
$13 \cdot 2^{i-2} \leq n < 5 \cdot 2^i$ and $i \geq 2$;

\item[(b)]
$\{ 2^i-1, 2^{i+1} - 1, 3 \cdot 2^i - 1, 9 \cdot 2^{i-1} - 1 \}$ is a string
attractor for ${\bf vtm}[0..n-1]$ for
$5 \cdot 2^i \leq n < 6 \cdot 2^i$ and $i \geq 2$;

\item[(c)]
$\{ 3 \cdot 2^{i-1}-1, 2^{i+1} - 1, 2^{i+2} - 1, 5 \cdot 2^i - 1 \}$
is a string attractor for ${\bf vtm}[0..n-1]$ for
$6 \cdot 2^i \leq n < 13 \cdot 2^{i-1}$ and $i \geq 2$.
\end{itemize}

This can be verified in exactly the same way that we verified the claims
for the Thue-Morse sequence $\bf t$.  The {\tt Walnut} code is given
below:

{\footnotesize
\begin{verbatim}
reg power2 msd_2 "0*10*":

def vtmfaceq "(q+k>=l) & At Au (t>=k & t<=l & l+u=q+t) => VTM[t]=VTM[u]":

def testva1 "Ep,q (k+q=l+p) & (p<=q) & (q<n) & $vtmfaceq(k,l,q) & (p+1<=x) & (q+1>=x)":
def testva2 "Ep,q (k+q=l+p) & (p<=q) & (q<n) & $vtmfaceq(k,l,q) & (2*p+2<=3*x) & (2*q+2>=3*x)":
def testva3 "Ep,q (k+q=l+p) & (p<=q) & (q<n) & $vtmfaceq(k,l,q) & (p+1<=2*x) & (q+1>=2*x)":
def testva4 "Ep,q (k+q=l+p) & (p<=q) & (q<n) & $vtmfaceq(k,l,q) & (p+1<=3*x) & (q+1>=3*x)":

eval checkva "An Ax Ak Al
((n>=13) & $power2(x) & (x>=4) & (13*x<=4*n) & (n<5*x) & (k<=l) & (l<n)) =>
($testva1(k,l,n,x) | $testva2(k,l,n,x) | $testva3(k,l,n,x) | $testva4(k,l,n,x))":

def testvb1 "Ep,q (k+q=l+p) & (p<=q) & (q<n) & $vtmfaceq(k,l,q) & (p+1<=x) & (q+1>=x)":
def testvb2 "Ep,q (k+q=l+p) & (p<=q) & (q<n) & $vtmfaceq(k,l,q) & (p+1<=2*x) & (q+1>=2*x)":
def testvb3 "Ep,q (k+q=l+p) & (p<=q) & (q<n) & $vtmfaceq(k,l,q) & (p+1<=3*x) & (q+1>=3*x)":
def testvb4 "Ep,q (k+q=l+p) & (p<=q) & (q<n) & $vtmfaceq(k,l,q) & (2*p+2<=9*x) & (2*q+2>=9*x)":

eval checkvb "An Ax Ak Al
((n>=20) & $power2(x) & (x>=4) & (5*x<=n) & (n<6*x) & (k<=l) & (l<n)) =>
($testvb1(k,l,n,x) | $testvb2(k,l,n,x) | $testvb3(k,l,n,x) | $testvb4(k,l,n,x))":

def testvc1 "Ep,q (k+q=l+p) & (p<=q) & (q<n) & $vtmfaceq(k,l,q) & (2*p+2<=3*x) & (2*q+2>=3*x)":
def testvc2 "Ep,q (k+q=l+p) & (p<=q) & (q<n) & $vtmfaceq(k,l,q) & (p+1<=2*x) & (q+1>=2*x)":
def testvc3 "Ep,q (k+q=l+p) & (p<=q) & (q<n) & $vtmfaceq(k,l,q) & (p+1<=4*x) & (q+1>=4*x)":
def testvc4 "Ep,q (k+q=l+p) & (p<=q) & (q<n) & $vtmfaceq(k,l,q) & (p+1<=5*x) & (q+1>=5*x)":

eval checkvc "An Ax Ak Al
((n>=24) & $power2(x) & (x>=4) & (6*x<=n) & (2*n<13*x) & (k<=l) & (l<n)) =>
($testvc1(k,l,n,x) | $testvc2(k,l,n,x) | $testvc3(k,l,n,x) | $testvc4(k,l,n,x))":
\end{verbatim}
}
The formulas {\tt checkva}, {\tt checkvb}, and {\tt checkvc} all return
{\tt true}, so the result is proved.
\end{proof}

\section{Non-constant string attractor size}

So far all the infinite words we have discussed have prefix
string attractor size bounded
by a constant.   However, if we can guess the form of a string attractor, then
we can verify it with a first-order formula, even for increasing size.

Let us consider the characteristic sequence of the powers of $2$:
${\bf p} = 11010001 \cdots $.   More precisely, ${\bf p}[n+1] = 1$ if
$n$ is a power of $2$, and $0$ otherwise.  This is an automatic sequence,
generated by a $3$-state automaton in base $2$.

\begin{theorem}
Suppose $n \geq 3$.  Then
\begin{itemize}
\item[(a)] $\{ 2 \cdot 4^j -1 \suchthat 0 \leq j \leq i \} \,\cup\, \{ 3 \cdot 4^i - 1 \}$
is a string attractor of cardinality $i+2$ for ${\bf p}[0..n-1]$ and 
$3 \cdot 4^i \leq n < 6 \cdot 4^i$ and $i \geq 0$;
\item[(b)]  $\{ 4^j -1 \suchthat 0 \leq j \leq i+1 \} \,\cup\, \{ 6\cdot 4^i - 1 \}$ is
a string attractor of cardinality $i+3$ for ${\bf p}[0..n-1]$ and $6 \cdot 4^i \leq n < 12\cdot 4^i$
 and $i \geq 0$.
\end{itemize}
\end{theorem}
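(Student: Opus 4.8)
The plan is to argue exactly as in the proof of Theorem~\ref{thue}: guess the form of the string attractor and then verify the guess with {\tt Walnut}. The one genuinely new feature is that the candidate attractor now has cardinality growing like $\log_4 n$, so we cannot express ``there exist $i_1,\dots,i_c$ forming an attractor'' by a fixed block of $c$ existential quantifiers. We get around this by observing that every index occurring in the two claimed attractors has the shape $c \cdot 4^j - 1$, so membership in such a set is itself definable over the base-$2$ numeration system. Indeed, $m \in \{2\cdot 4^j - 1 \suchthat 0 \le j \le i\}$ holds iff the base-$2$ word of $m+1$ consists of a single $1$ followed by an \emph{odd} number of $0$s (equivalently, $m+1 = 2\cdot 4^j$ for some $j$) together with the bound $m+1 \le 2 \cdot 4^i$ (which forces $j \le i$); likewise $m \in \{4^j - 1 \suchthat 0 \le j \le i+1\}$ holds iff the base-$2$ word of $m+1$ is a single $1$ followed by an \emph{even} number of $0$s together with $m+1 \le 4^{i+1}$. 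Hence each of the two claims becomes a single sentence that {\tt Walnut} can decide.

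In detail, I would introduce a variable $x$, constrained by a regular expression to have base-$2$ representation in {\tt 0*1(00)*} (so $x$ ranges over the powers of $4$ and plays the role of $4^i$), and rewrite the hypotheses $3\cdot 4^i \le n < 6 \cdot 4^i$ and $6 \cdot 4^i \le n < 12 \cdot 4^i$ as the linear constraints $3x \le n < 6x$ and $6x \le n < 12x$. I would define a {\tt faceq}-style macro for ${\bf p}$ exactly as {\tt pdfaceq} was defined for the period-doubling word, and define membership predicates $\mathrm{inA}(m,x)$ and $\mathrm{inB}(m,x)$ as above, each being the disjunction of a regular test on $m+1$ together with a bound ($m+1 \le 2x$, resp.\ $m+1 \le 4x$) and the single extra anchor $m+1 = 3x$, resp.\ $m+1 = 6x$. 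The master formula then asserts: for all $n, x$ satisfying the range hypotheses and all $k \le l < n$, there exist $p \le q < n$ with $p + l = q + k$ and ${\bf p}[k..l] = {\bf p}[p..q]$ (checked via the {\tt faceq} macro with arguments $k, l, q$) and some $m$ with $p \le m \le q$ and $\mathrm{inA}(m,x)$ (respectively $\mathrm{inB}(m,x)$). If {\tt Walnut} returns {\tt true} on both formulas, the two string-attractor claims are established; it is also prudent to confirm a handful of small cases by direct computation as a sanity check.

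It then remains to verify the combinatorial bookkeeping by hand. The intervals $[3\cdot 4^i,\, 6\cdot 4^i)$ and $[6\cdot 4^i,\, 12 \cdot 4^i)$ over $i \ge 0$ partition $[3,\infty)$, since $12\cdot 4^i = 3\cdot 4^{i+1}$, so the two cases together handle every $n \ge 3$. Every listed element is a legal index of ${\bf p}[0..n-1]$: the largest, $3\cdot 4^i - 1$ (resp.\ $6\cdot 4^i - 1$), is at most $n-1$ by the range hypothesis, and the rest are smaller. Finally the cardinalities are exactly $i+2$ and $i+3$: the $4^j$ (hence the $2\cdot 4^j$) for distinct $j$ are pairwise distinct, and the anchor differs from all of them because $3\cdot 4^i = 2^{2i} + 2^{2i+1}$ and $6\cdot 4^i = 2^{2i+1} + 2^{2i+2}$ each have two $1$s in base $2$, while $4^j$ and $2\cdot 4^j$ have only one. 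Since $i = \Theta(\log n)$ on each interval, the attractors produced have size $O(\log n)$.

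The main obstacle is the same pair of difficulties encountered for the Thue-Morse and {\tt vtm} words. The first is finding the right guess: the anchor indices $3\cdot 4^i - 1$ and $6\cdot 4^i - 1$ are the nonobvious ingredient, and they are needed to attract the long all-zero factors of ${\bf p}$, which the remaining, self-similar part of the attractor cannot reach. The second is getting the {\tt Walnut} computation to terminate within feasible time and memory; the regular encoding of the growing index set is what makes this realistic, since the input alphabet and the number of quantifier alternations stay bounded independently of $i$.
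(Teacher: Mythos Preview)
Your proposal is correct and follows essentially the same strategy as the paper: introduce a variable $x$ constrained by the regular expression \texttt{0*1(00)*} to stand for $4^i$, encode membership in the candidate attractor by an existentially quantified auxiliary variable together with a regularity test and a linear bound, and then let {\tt Walnut} verify the resulting universal sentence. The only cosmetic difference is that the paper quantifies over a power-of-$4$ parameter $y$ and forms the index as $2y-1$ (resp.\ $y-1$), whereas you quantify over the index $m$ itself and test $m+1$ against a regular language; these are interchangeable encodings of the same predicate, and your additional bookkeeping on cardinalities and interval coverage is a welcome supplement that the paper leaves implicit.
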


\begin{proof}
Again, we use {\tt Walnut}.  Here {\tt P2} is the abbreviation for $\bf p$.
\begin{verbatim}
def p2faceq "(q+k>=l) & At Au (t>=k & t<=l & l+u=q+t) => P2[t]=P2[u]":
reg power4 msd_2 "0*1(00)*":

def test1 "An Ax Ak Al
($power4(x) & (3*x<=n) & (n<6*x) & (k<=l) & (l<n))
=> (Ep Eq Ey (k+q=l+p) & (p<=q) & (q<n) & $p2faceq(k,l,q) &
(($power4(y) & y<=x & p+1<=2*y & 2*y<=q+1) | (y=3*x & p+1<=y & y<=q+1)))":

def test2 "An Ax Ak Al
($power4(x) & (6*x<=n) & (n<12*x) & (k<=l) & (l<n))
=> (Ep Eq Ey (k+q=l+p) & (p<=q) & (q<n) & $p2faceq(k,l,q) &
(($power4(y) & y<=4*x & p+1<=y & y<=q+1) | (y=6*x & p+1<=y & y<=q+1)))":
\end{verbatim}
Both {\tt test1} and {\tt test2} return {\tt true}, so the theorem is proved.
\end{proof}

Of course, this theorem only gives an upper bound on the string attractor size
for prefixes of $\bf p$.  For a matching lower bound, other techniques must be used.

\section{Span}

After seeing a draft of this paper, Marinella Sciortino asked one of us about a certain feature of string attractors:   span.   The {\it span} of a string attractor $S = \{ i_1, i_2, \ldots, i_k \}$ with $i_1 < i_2 < \cdots < i_k $ is defined
to be $\spann(S) = i_k - i_1$.   It measures the distance between the largest and smallest elements of a string attractor.

One can then ask, among all string attractors of minimum cardinality for a finite string $x$, what are the maximum and
minimum span?   Call these quantities 
$\maxspan(x)$ and $\minspan(x)$.   We then have the following theorem:
\begin{theorem}
If ${\bf s}$ is an automatic sequence with $O(1)$ string attractor size, then $f(n) := \maxspan({\bf s}[0..n-1])$ and
$g(n) := \minspan({\bf s}[0..n-1])$ are synchronized
functions of $n$.    That is, there is an automaton that recognizes the representations of
$(n, f(n))$ and $(n,g(n))$.
\end{theorem}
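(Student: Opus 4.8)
The plan is to express each of $f$ and $g$ by a first-order formula with two free variables $n$ and $m$, and then invoke Theorem~\ref{thm1}(b) to obtain the required automata. Indeed, being a synchronized function means exactly that the set of pairs $(n,f(n))$ (and likewise $(n,g(n))$), read in parallel in the numeration system, is recognized by a finite automaton, so once the formulas are in hand the theorem is immediate.

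First, since $\mathbf{s}$ has $O(1)$ string attractor size, fix a constant $c$ such that every prefix $\mathbf{s}[0..n-1]$ has a string attractor of cardinality at most $c$. For each $k$ with $1 \le k \le c$, let $\sigma_k(i_1,\ldots,i_k,n)$ be the evident $k$-index analogue of the formula $\sa$ constructed in the proof of the decidability theorem above; it asserts that $\{i_1,\ldots,i_k\}$ is a string attractor for $\mathbf{s}[0..n-1]$. Then, for $k \ge 2$,
$$ \Gamma_k(n) := \bigl(\exists i_1,\ldots,i_k\ \sigma_k(i_1,\ldots,i_k,n)\bigr) \wedge \neg\bigl(\exists i_1,\ldots,i_{k-1}\ \sigma_{k-1}(i_1,\ldots,i_{k-1},n)\bigr) $$
asserts that $\gamma(\mathbf{s}[0..n-1]) = k$, while $\Gamma_1(n) := \exists i_1\ \sigma_1(i_1,n)$ handles $k=1$. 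Since $1 \le \gamma(\mathbf{s}[0..n-1]) \le c$ for all $n \ge 1$, exactly one of $\Gamma_1(n),\ldots,\Gamma_c(n)$ holds for each such $n$.

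Next, set $\psi_k(n,m) := \exists i_1\cdots\exists i_k\ \bigl(i_1 < \cdots < i_k \wedge \sigma_k(i_1,\ldots,i_k,n) \wedge i_k - i_1 = m\bigr)$, which witnesses that some string attractor of cardinality exactly $k$ for $\mathbf{s}[0..n-1]$ has span $m$ (for $k=1$ this forces $m=0$, the span of a singleton, as it should). Define
$$ F_k(n,m) := \Gamma_k(n) \wedge \psi_k(n,m) \wedge \forall j_1\cdots\forall j_k\ \bigl((j_1 < \cdots < j_k \wedge \sigma_k(j_1,\ldots,j_k,n)) \implies j_k - j_1 \le m\bigr), $$
so that $F_k(n,m)$ holds iff $\gamma(\mathbf{s}[0..n-1]) = k$ and $m$ is the maximum span over all size-$k$ string attractors. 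Then $\maxspan(\mathbf{s}[0..n-1]) = m$ is equivalent to $\bigvee_{k=1}^{c} F_k(n,m)$, a bona fide first-order formula. The formula $G(n,m)$ for $g$ is identical except that $j_k - j_1 \le m$ is replaced by $j_k - j_1 \ge m$. Applying Theorem~\ref{thm1}(b) to these two formulas produces automata recognizing the representations of $(n,f(n))$ and $(n,g(n))$, as claimed.

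The only point at which the hypothesis enters — and the only genuine obstacle — is the boundedness of $\gamma$: a first-order formula has a fixed number of variables, so we may only quantify over a fixed number of attractor indices, and the disjunction over the possible values of $k$ must be finite; both are guaranteed precisely by the $O(1)$ bound, and without it the argument breaks down. Everything else is routine first-order encoding. For completeness one should also note that $f(n)$ and $g(n)$ are well defined: $\{0,1,\ldots,n-1\}$ is always a string attractor for $\mathbf{s}[0..n-1]$, so a minimum-cardinality one exists, and there are only finitely many string attractors of any given cardinality, so the maximum and minimum spans are attained.
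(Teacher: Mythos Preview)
Your proof is correct and follows the paper's overall strategy: express the graph of the function by a first-order formula in $\theory(\Enn,+,0,1,n\mapsto\mathbf{s}[n])$ and invoke Theorem~\ref{thm1}(b). The paper is terser, writing a single predicate $\saspan(n,r)$ that existentially quantifies over $c$ ordered (not necessarily distinct) indices forming an attractor with span $r$, and then taking the least or greatest such $r$. You instead case-split on the actual value $k=\gamma(\mathbf{s}[0..n-1])$ via the predicates $\Gamma_k$ and restrict to attractors of cardinality exactly $k$ before optimizing the span. Your extra care is justified: by definition $\maxspan$ and $\minspan$ range only over \emph{minimum-cardinality} attractors, and your disjunction over $k$ enforces this, whereas the paper's single-$c$ formula as literally written ranges over all attractors of size $\le c$; when $\gamma(\mathbf{s}[0..n-1])<c$ this can in principle admit larger spans for $\maxspan$. (In the paper's worked example this issue does not arise because $c=2=\gamma$ for all $n\ge 2$.) So the two arguments share the same skeleton, but yours is the more faithful implementation of the stated definition.
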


\begin{proof}
It suffices to give a first-order formula for $\minspan$ and $\maxspan$.  Let $c$ be an upper bound on string attractor size for all 
prefixes of $\bf s$.   Then we can create formulas as follows:
\begin{align*}
\saspan(n,r) &:= \exists i_1, i_2, \ldots, i_c \ \sa(i_1, i_2, \ldots, i_c, n) \ \wedge\  (i_1 \leq i_2) \ \wedge\ (i_2 \leq i_3) \ \wedge\ \cdots \ \wedge (i_{c-1} \leq i_c) \\
& \wedge\ i_c = i_1 + r \\
\minspan(n,r) &:= \saspan(n,r) \ \wedge\ \forall s\ (s<r) \implies \neg\saspan(n,s) \\
\maxspan(n,r) &:= \saspan(n,r) \ \wedge\ \forall s\ (s>r) \implies \neg\saspan(n,s) .
\end{align*}
\end{proof}

As an example, we can carry this out for the period-doubling word with the following
{\tt Walnut} commands:
\begin{verbatim}
def pdsaspan "E i1, i2 $pdsa2(i1,i2,n) & i2=i1+r":
def pdsamind "$pdsaspan(n,r) & As (s<r) => ~$pdsaspan(n,s)":
def pdsamaxd "$pdsaspan(n,r) & As (s>r) => ~$pdsaspan(n,s)": 
\end{verbatim}
Looking at the results, we easily get
\begin{theorem}
For the period-doubling sequence the $\minspan$ of the length-$n$ prefix equals
$$
\begin{cases}
0, &\text{for $n = 1$}; \\
1, & \text{for $2 < n < 5$}; \\
2^i, & \text{for $3 \cdot 2^i \leq n < 3 \cdot 2^{i+1}$ and $i \geq 1 $ }.
\end{cases}
$$

The $\maxspan$ of the length-$n$ prefix equals
$$
\begin{cases}
0, &\text{for $n = 1$}; \\
1, & \text{for $2 \leq n \leq 3$}; \\
2^i, & \text{if $5 \cdot 2^{i-1} - 1 \leq n < 6 \cdot 2^{i-1} - 2$ and $i \geq 1 $} ; \\
3 \cdot 2^i, & \text{if $6 \cdot 2^i - 1 \leq n \leq 5 \cdot 2^{i+1} - 2$ and $i \geq 0$}.
\end{cases}
$$
\end{theorem}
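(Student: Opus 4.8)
The plan is to leverage the synchronized automata \texttt{pdsamind} and \texttt{pdsamaxd} already constructed above, which recognize exactly the base-$2$ pairs $(n,r)$ with $r = \minspan({\bf pd}[0..n-1])$, respectively $r = \maxspan({\bf pd}[0..n-1])$. By the preceding theorem these relations are synchronized, hence both automata are finite, and any closed form of the type claimed can in principle be extracted from their transition structure. So the first step is simply to run {\tt Walnut} on the commands \texttt{pdsaspan}, \texttt{pdsamind}, \texttt{pdsamaxd} to obtain the three automata.

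Next I would inspect \texttt{pdsamind} and \texttt{pdsamaxd}: tracing accepting paths and grouping them by the block structure of the base-$2$ input, one sees that within each ``regime'' the representation of $n$ carries a fixed leading pattern while $r$ is forced to be a power of $2$ (or $3$ times a power of $2$), plus a handful of short accepting paths handling the small-$n$ exceptions. This reverse-engineering is what produces the piecewise formulas in the statement, including the isolated $n=1$ and $2 \le n \le 3$ cases and the $i \ge 0$ versus $i \ge 1$ distinctions. To avoid relying on reading the automata by hand, I would then turn the guesses into a proof inside {\tt Walnut}. Using the macro \texttt{power2} (matching $x = 2^i$), define a predicate \texttt{mindclaim(n,r)} as the disjunction of the three cases (e.g.\ $r = 0 \wedge n = 1$; $r = 1 \wedge 2 \le n \le 3$; $\exists x\,(\texttt{power2}(x) \wedge x \ge 2 \wedge 3x \le n \wedge n < 6x \wedge r = x)$), and similarly \texttt{maxdclaim(n,r)} with its four cases; then evaluate \texttt{An Ar (pdsamind(n,r) <=> mindclaim(n,r))} and the analogous statement for $\maxspan$. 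If both evaluate to {\tt true}, the theorem follows. As a cheap sanity check one can also verify that each claimed predicate is single-valued and total in $n$, so that it genuinely defines a function.

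The main obstacle is bookkeeping at the interval endpoints. The $\maxspan$ description in particular interleaves $2^i$ and $3 \cdot 2^i$ pieces with shifted endpoints such as $5 \cdot 2^{i-1} - 1$ and $6 \cdot 2^{i-1} - 2$, so the cases are close to — but must be exactly — mutually exclusive and exhaustive, and a single off-by-one in rendering ``$5 \cdot 2^{i-1} - 1$'' as a linear inequality over $x = 2^{i-1}$ would break the equivalence. The redeeming feature is that {\tt Walnut} checks the equivalence mechanically, so any such slip is detected rather than silently propagated; iterating on the predicate until {\tt Walnut} returns {\tt true} constitutes the entire proof, exactly as in the verification of Theorem~\ref{thue}.
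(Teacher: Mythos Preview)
Your proposal is correct and follows essentially the same approach as the paper: the paper simply runs the {\tt Walnut} commands {\tt pdsaspan}, {\tt pdsamind}, {\tt pdsamaxd} and then reads the closed forms off the resulting automata (``Looking at the results, we easily get\ldots''), with no further argument. Your plan does the same inspection but adds a mechanical equivalence check in {\tt Walnut} between the automaton and an explicit {\tt mindclaim}/{\tt maxdclaim} predicate, which is strictly more than the paper does; note only that your sample clause ``$r=1 \wedge 2 \le n \le 3$'' does not match the stated range $2<n<5$, so you would indeed need the iterate-until-{\tt true} step you describe to pin down the boundary cases.
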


\section{Asymptotic bounds on string attractor size}

We have seen several $k$-automatic sequences that have $O(1)$-size string attractors for all prefixes. Others, like the characteristic sequence of powers of $2$, have $\Theta(\log n)$-size string attractors for prefixes of length $n$. In this section, we show that these are the only two possibilities for $k$-automatic sequences: string attractors for length-$n$ prefixes have size $\Theta(1)$ or $\Theta(\log n)$ asymptotically.

We prove two upper bounds on the size of the string attractor in  Theorem~\ref{thm:appearanceattractor} and Theorem~\ref{thm:recurrenceattractor}. These do not require the word to be $k$-automatic, but instead have conditions on the \emph{appearance function} and \emph{recurrence function} (see, for example,
\cite[\S 10.9, 10.10]{Allouche&Shallit:2003}), which happen to be satisfied for $k$-automatic sequences, as discussed in Corollary~\ref{cor:automaticrecurrence}. 

Recall that an infinite word $\bf w$ is said to be \textit{recurrent} if every finite factor of $\bf w$ has infinitely many occurrences in $\bf w$. If, in addition, there is a constant $C$ such that two consecutive occurrences (or from the beginning of the word to the first occurrence) of every length-$n$ factor are separated by at most $C n$ positions, then $\bf w$ is said to be {\it linearly recurrent}. Let $\recurconst_\mathbf{w}$, the \emph{recurrence constant}, be the least such $C$. The related notion of the \emph{appearance function} records, for each $n$, the least smallest prefix that contains all length-$n$ factors. This, also, can be linearly bounded, i.e., there is a constant $C$ such that every length-$n$ appears in some prefix of at most $C n$ symbols. When it exists, we call the least such $C$ the \emph{appearance constant}, and denote it $\appearconst_\mathbf{w}$. 

%
%

The appearance and recurrence constants are computable for $k$-automatic sequences, so let us define the relevant first-order formulas:
\begin{itemize}
    \item $\faceq(i,j,n)$:  the factor 
    ${\bf w}[i..i+n-1]$ is the same as
    ${\bf w}[j..j+n-1]$
    \item $\appear(m,n)$:   all length-$n$ factors of $\bf w$ have an occurrence in $[0..m-1]$
    \item $\leastappear(m,n)$:  $m$ is the least integer such that $\appear(m,n)$ holds
    \item $\recurfac(i,n)$:   the factor
    ${\bf w}[i..i+n-1]$ occurs infinitely often
    in $\bf w$
    \item $\recur(m,n)$:  all length-$n$ factors of
    $\bf w$ have an occurrence contained in every
    length-$m$ factor
    \item $\leastrecur(m,n)$:  $m$ is the least
    integer such that $\recur(m,n)$ holds
\end{itemize}
Here are the definitions in first-order logic:
\begin{align*}
   \faceq(i,j,n) &:= \forall t \ (t<n) \implies 
    {\bf w}[i+t]={\bf w}[j+t] \\
   \appear(m,n) & := \forall i \ \exists j \ \faceq(i,j,n) \ \wedge\ j<m  \\
   \leastappear(m,n) &:= \appear(m,n) \ \wedge\ 
    \neg\appear(m-1,n) \\
    \recur(m,n) & := \forall i \ \forall k\ \exists j \ \faceq(i,j,n) \ \wedge\ k \leq j\ \wedge\ j+n\leq k+m \\
    \leastrecur(m,n) & := \recur(m,n) \ \wedge\ \neg\recur(m-1,n) 
\end{align*}

\begin{lemma}
    \label{lem:supremum}
    Let $S \subseteq \mathbb N_{>0} \times \mathbb N_{>0}$ be a $k$-automatic set such that for every $y \in \mathbb N_{>0}$, there are only finitely many $x \in \mathbb N_{>0}$ such that $(x,y) \in S$. Then $\sup_{(x,y) \in S} \frac{x}{y}$ is finite and computable.
\end{lemma}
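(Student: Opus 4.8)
The plan is to exploit the fact that $k$-automatic sets are exactly the sets whose base-$k$ representations form a regular language, together with the finiteness hypothesis on the fibers, to reduce the supremum to a statement that a certain regular language is finite. First I would consider the language $L$ of base-$k$ representations of pairs $(x,y) \in S$, read in parallel with the usual padding convention; by hypothesis this is accepted by some DFA $M$. The condition that for each fixed $y$ there are only finitely many $x$ with $(x,y)\in S$ means that the ``$x$-slices'' of $L$ are all finite. The quantity we want to bound is $\sup \frac{x}{y}$ over $L$, and the key observation is that if this supremum were infinite, then for every constant $C$ there would be a pair in $S$ with $x > Cy$, i.e. with $x$ having arbitrarily many more base-$k$ digits than $y$ (or at least a much larger leading portion). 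I would like to turn ``$x/y$ unbounded'' into ``$M$ has an accepting run that pumps a cycle which increases $x$ but not $y$,'' contradicting the fiber-finiteness.

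Concretely, the main steps are as follows. Step 1: set up the parallel-representation DFA $M$ for $S$ and note that $\frac{x}{y} \le C$ for all $(x,y)\in S$ is expressible as a first-order sentence $\varphi_C$ in $\theory(\Enn,+,0,1)$ (even without any extra automatic sequence), namely $\forall x\,\forall y\ \big((x,y)\in S \wedge x \ge 1 \wedge y \ge 1\big) \implies x \le C\cdot y$; by Theorem~\ref{thm1}, for each fixed $C$ this is decidable, which gives the computability half once we know a finite bound exists. Step 2: prove the bound is finite. Suppose not. Then by the pumping lemma applied to $M$: taking a pair $(x,y)$ with $x/y$ huge forces the accepting run to be long relative to the number of states, and since $x$ is the component with many digits, some cycle in the run strictly increases the $x$-component on each repetition; but if $y$ has few digits, we can arrange (by choosing a pumpable cycle that lies entirely in the all-zero-high-digit region of $y$) that this same cycle does not change $y$. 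Iterating the pump then produces infinitely many distinct $x$ with a fixed $y$ such that $(x,y)\in S$, contradicting the hypothesis. Step 3: conclude that $\sup_{(x,y)\in S} x/y$ is a finite rational; to compute it, run the decision procedure of Theorem~\ref{thm1} on $\varphi_C$ for $C = 1, 2, 3, \dots$ until it returns true, or more efficiently observe that the candidate suprema are ratios bounded in terms of the structure of $M$ (e.g. determined by the finitely many simple paths and simple cycles), so one can enumerate the finitely many candidate values and check each with a first-order query; the largest value $C$ for which $\varphi_C$ fails, plus one increment in the right direction, pins down the exact supremum, and whether it is attained can also be checked by a first-order query.

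The hard part will be Step 2, the pumping argument: one must be careful about the parallel encoding convention (the shorter of $x,y$ is padded with leading zeros) so that ``pump a cycle that grows $x$ but not $y$'' is actually realizable. The subtlety is that a naively chosen pumped cycle might also touch nonzero digits of $y$ and thereby change $y$; the fix is to note that when $x/y$ is large, the high-order digits of the encoding consist of digits of $x$ paired with $0$'s of $y$, and a sufficiently long such stretch must contain a state repetition of $M$, yielding a cycle in precisely the region where only $x$ varies. Making this region-argument precise — quantifying ``sufficiently long'' in terms of the number of states of $M$ and verifying that pumping there genuinely yields a new valid pair in $S$ with the same $y$ — is the crux, and everything else is bookkeeping or a direct appeal to Theorem~\ref{thm1}.
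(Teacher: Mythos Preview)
Your finiteness argument via pumping in the leading-zero region of $y$ is exactly the paper's proof, which states it even more tersely and gives the explicit bound $x/y \le k^{s+1}$ where $s$ is the number of states. For computability the paper simply cites \cite{Schaeffer&Shallit:2012}; note that your integer-iteration over $C$ only yields $\lceil \sup x/y \rceil$ rather than the exact supremum, but your alternative of reading off candidate ratios from the simple paths and cycles of the automaton is the correct idea and is essentially what that reference carries out.
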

\begin{proof}
    Briefly, $\frac{x}{y} \leq k^{s+1}$ for all $(x,y) \in S$ where $s$ is the number of states in the automaton accepting $S$. Otherwise, the base-$k$ representation of $x$ is longer than the representation of $y$ by more than a pumping length ($s$) and thus $(x,y) \in S$ can be pumped to infinitely many pairs $(x', y) \in S$, contradicting an assumption. This proves the supremum is finite; for computability, see our paper on critical exponents \cite{Schaeffer&Shallit:2012}. 
\end{proof}

\begin{corollary}
	\label{cor:automaticrecurrence}
	Let $\mathbf{w} \in \Sigma^{\omega}$ be $k$-automatic. Then $\appearconst_\mathbf{w}$ exists and is computable. If $\mathbf{w}$ is linearly recurrent, then $\recurconst_\mathbf{w}$ exists and is computable. 
\end{corollary}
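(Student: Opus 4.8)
The plan is to derive both claims from Theorem~\ref{thm1} and Lemma~\ref{lem:supremum}. Since $\mathbf{w}$ is $k$-automatic it is automatic for the regular base-$k$ numeration system, so applying Theorem~\ref{thm1}(b) to the first-order formulas $\leastappear$ and $\leastrecur$ displayed above produces automata recognizing
$$S_{\mathrm{app}} := \{(m,n) : \leastappear(m,n)\} \quad\text{and}\quad S_{\mathrm{rec}} := \{(m,n) : \leastrecur(m,n)\},$$
both $k$-automatic subsets of $\mathbb{N}_{>0}\times\mathbb{N}_{>0}$.

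First I would check the hypothesis of Lemma~\ref{lem:supremum} for $S_{\mathrm{app}}$: for each $n$ there are at most $|\Sigma|^n$ factors of length $n$, each of which occurs somewhere in $\mathbf{w}$, so the appearance function $a(n) := \min\{m : \appear(m,n)\}$ is finite, and $(a(n),n)$ is the unique element of $S_{\mathrm{app}}$ in the fiber over $n$. Lemma~\ref{lem:supremum} then gives that $\sup_{(m,n)\in S_{\mathrm{app}}} m/n = \sup_{n\geq 1} a(n)/n$ is finite and computable; since $\appearconst_\mathbf{w}$ is by definition the least $C$ with $a(n)\leq Cn$ for all $n$, this supremum is $\appearconst_\mathbf{w}$, proving the first assertion.

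For the second assertion I would run the identical argument on $S_{\mathrm{rec}}$. The only new ingredient is that linear recurrence of $\mathbf{w}$ is exactly what guarantees $\recur(m,n)$ holds for all sufficiently large $m$, so that $r(n) := \min\{m : \recur(m,n)\}$ is finite and $(r(n),n)$ is the unique element of $S_{\mathrm{rec}}$ in the fiber over $n$. Lemma~\ref{lem:supremum} again yields $\sup_{n\geq 1} r(n)/n$ finite and computable, and this equals $\recurconst_\mathbf{w}$.

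I expect the only delicate point to be bookkeeping about conventions, namely confirming that $\leastappear$ and $\leastrecur$ really compute the functions appearing in the definitions of $\appearconst_\mathbf{w}$ and $\recurconst_\mathbf{w}$ (in particular reconciling the ``every length-$m$ window contains every length-$n$ factor'' phrasing of $\recur$ with the gap-between-consecutive-occurrences phrasing used to introduce $\recurconst_\mathbf{w}$, which may shift the constant by a bounded additive amount), and observing that the linear-recurrence hypothesis is used only to ensure $S_{\mathrm{rec}}$ is nonempty over every $n$; without it, Lemma~\ref{lem:supremum} still returns a finite value, but that value need no longer equal $\recurconst_\mathbf{w}$, which in that case need not exist. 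As a byproduct, since $\forall n\,\exists m\,\recur(m,n)$ is a decidable sentence, this argument also shows that linear recurrence is itself decidable for $k$-automatic words.
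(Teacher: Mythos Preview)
Your proposal is correct and follows essentially the same route as the paper: build the $k$-automatic relations $\leastappear$ and $\leastrecur$ via Theorem~\ref{thm1}(b), then invoke Lemma~\ref{lem:supremum} to conclude the relevant suprema are finite and computable. The paper's own proof is a two-sentence sketch of exactly this; you simply spell out the hypothesis-checking and the identification $\appearconst_\mathbf{w}=\sup_n a(n)/n$ that the paper leaves implicit.
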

\begin{proof}
    As we have seen, there are first-order predicates for appearance and recurrence. That is, predicates that accept $(m,n)$ if $m$ is the minimum length such that the window(s) (i.e., the prefix of length $m$, or every window of length $m$) intersects the desired factors (all or just recurrent factors) of length $n$. By the previous lemma, the appearance constant and recurrence constant are computable. 
\end{proof}

Let us specialize the definition of a string attractor to a limited set of lengths. 
\begin{definition}
	Given a word $\mathbf{w}$ and a set $L \subseteq \mathbb N$, a \emph{string attractor of $\mathbf{w}$ for lengths $L$} is a set of integers $S$ such that for every nonzero $\ell \in L$, every length-$\ell$ factor of $\mathbf{w}$ has some occurrence crossing an index in $S$. When $L$ is not specified, we take it to be $\mathbb N$, i.e., the string attractor property holds for all lengths. 
\end{definition}

\begin{lemma}
	\label{lem:appearance}
	Let $\mathbf{w} \in \Sigma^{\omega}$ be an infinite word with appearance constant $\appearconst_\mathbf{w} < \infty$. For all integers $n, s \geq 1$, there is a string attractor of $\mathbf{w}[0..n-1]$ for lengths $[s..2s]$ having size at most $2 \appearconst_\mathbf{w}$.
\end{lemma}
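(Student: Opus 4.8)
The plan is to partition the prefix $\mathbf{w}[0..n-1]$ into consecutive blocks of length roughly $\appearconst_\mathbf{w} \cdot s$ and put at most two string-attractor positions in each block, near the ends of a window guaranteed to contain all length-$\ell$ factors for $\ell \in [s..2s]$. The governing idea is the definition of the appearance constant: every factor of length $\ell$ occurs inside the prefix $\mathbf{w}[0..\appearconst_\mathbf{w}\ell - 1]$ (rounding up as needed). Since $\ell$ ranges only up to $2s$, a single prefix window of length about $2\appearconst_\mathbf{w} s$ already contains \emph{every} factor we need to cover. So one is tempted to just take $S$ to be a couple of positions inside that one window — but the subtlety is that an occurrence of a factor inside the window need not \emph{cross} a chosen index; the occurrence could sit entirely to the left or right of our two points. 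This is the main obstacle, and the fix is to use enough points, spaced so that no length-$\ell$ interval (with $\ell \geq s$) can slip between two consecutive chosen indices.

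First I would set $b := \lceil \appearconst_\mathbf{w} \rceil$ (or work with $\appearconst_\mathbf{w}$ directly and round only at the end) and consider the window $\mathbf{w}[0..2bs-1]$, which by the appearance property contains at least one occurrence of every factor of $\mathbf{w}[0..n-1]$ of every length $\ell \in [s..2s]$ (assuming $n \geq 2bs$; the case $n < 2bs$ is handled by taking $S$ to be, say, every $s$-th position in $[0..n-1]$, which is at most $2b$ points and trivially has the crossing property since consecutive points are $s$ apart and every factor has length $\geq s$). Inside the window $[0..2bs-1]$ I place the points $s-1, 2s-1, 3s-1, \ldots$, i.e. positions of the form $js-1$ for $1 \leq j \leq 2b$; that is $2b$ points, matching the claimed bound of $2\appearconst_\mathbf{w}$ after rounding. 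Any occurrence of a length-$\ell$ factor with $s \leq \ell \leq 2s$ lying inside $[0..2bs-1]$ occupies an interval of $\ell \geq s$ consecutive positions, and since consecutive chosen points are exactly $s$ apart, such an interval must contain one of them. Hence $S = \{ js - 1 : 1 \leq j \leq 2b \}$ is a string attractor of $\mathbf{w}[0..n-1]$ for lengths $[s..2s]$ of size $2b \leq 2\lceil \appearconst_\mathbf{w}\rceil$.

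The only loose end is the exact constant: the statement claims size at most $2\appearconst_\mathbf{w}$, not $2\lceil \appearconst_\mathbf{w} \rceil$. I would tighten the counting by noting that we do not need the window to have length a multiple of $s$: all length-$\ell$ factors with $\ell \leq 2s$ appear in $\mathbf{w}[0..\lceil 2\appearconst_\mathbf{w} s\rceil - 1]$, and to cover an interval of length spanning at most $2s$ positions within a window of length $L$ it suffices to place $\lceil L/s \rceil - 1$ or so equally spaced points; plugging in $L = \lceil 2\appearconst_\mathbf{w} s \rceil$ gives at most $2\appearconst_\mathbf{w}$ points for all sufficiently large $s$, and a direct check disposes of small $s$. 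I expect this bookkeeping to be the fussiest part, but it is routine; the conceptual content is entirely in the observation that equally-spaced points at spacing $s$ force every length-$\geq s$ interval to cross one of them, combined with the one-window consequence of a finite appearance constant.
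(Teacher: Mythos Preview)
Your proposal is correct and follows essentially the same approach as the paper: use the appearance constant to confine all length-$\ell$ factors with $\ell\le 2s$ to a prefix of length at most $2\appearconst_{\mathbf w}s$, then lay down an arithmetic progression of stride $s$ inside that prefix so that every interval of length $\ge s$ must cross one of the chosen points. Your worry about $2\appearconst_{\mathbf w}$ versus $2\lceil\appearconst_{\mathbf w}\rceil$ is more bookkeeping than the paper bothers with, and in fact the sharper count goes through: a length-$\ell$ factor with $\ell\in[s,2s]$ has its first occurrence starting at some $a\le \appearconst_{\mathbf w}\ell-\ell$, so the needed AP index is $\lfloor a/s\rfloor+1\le (\appearconst_{\mathbf w}-1)\cdot 2+1=2\appearconst_{\mathbf w}-1$, giving at most $2\appearconst_{\mathbf w}$ points without any ceiling.
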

\begin{proof}
	All factors of length $\leq 2s$ in $\mathbf{w}$ intersect in a prefix $x$ of length at most $2\appearconst_\mathbf{w} s$ by the definition of the appearance constant. Take any arithmetic progression of stride $s$ through this interval, and observe that it has at most $2 \appearconst_\mathbf{w}$ terms.
	
	Obviously every factor of length $\geq s$ in $x$ intersects the arithmetic progression. On the other hand, all factors of length $\leq 2s$ have an occurrence intersecting the prefix. It follows that $S$ is a string attractor for $\mathbf{w}[0..n-1]$ for lengths $[s..2s]$.
\end{proof}

\begin{theorem}
	\label{thm:appearanceattractor}
	Let $\mathbf{w} \in \Sigma^{\omega}$ be an infinite word with appearance constant $\appearconst_\mathbf{w} < \infty$. There is a string attractor of $\mathbf{w}[0..n-1]$ of size $O(\appearconst_\mathbf{w} \log n)$.
\end{theorem}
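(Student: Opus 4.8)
The plan is to build the full string attractor by taking a union of the length-restricted attractors from Lemma~\ref{lem:appearance} across dyadic scales. Specifically, for each integer $j$ with $0 \leq j \leq \lceil \log_2 n \rceil$, set $s_j = 2^j$ and apply Lemma~\ref{lem:appearance} to obtain a set $S_j$ of size at most $2\appearconst_\mathbf{w}$ that is a string attractor of $\mathbf{w}[0..n-1]$ for lengths $[2^j .. 2^{j+1}]$. Let $S = \bigcup_j S_j$. Every length $\ell$ with $1 \leq \ell \leq n$ lies in some dyadic interval $[2^j .. 2^{j+1}]$ with $2^j < n$, so each nonempty factor of $\mathbf{w}[0..n-1]$ of length $\ell$ has an occurrence crossing an index of $S_j \subseteq S$; hence $S$ is a genuine string attractor for $\mathbf{w}[0..n-1]$. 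The size bound follows immediately: $|S| \leq \sum_j |S_j| \leq 2\appearconst_\mathbf{w} (\lceil \log_2 n \rceil + 1) = O(\appearconst_\mathbf{w} \log n)$.

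The key steps, in order, are: (1) invoke Lemma~\ref{lem:appearance} at each scale $s = 2^j$ to get the $S_j$; (2) observe that the dyadic intervals $[2^j..2^{j+1}]$ for $j = 0, 1, \ldots, \lceil \log_2 n\rceil$ cover all lengths in $\{1, \ldots, n\}$ — note they overlap at the endpoints, which is harmless; (3) argue that a union of string attractors for a family of length-sets is a string attractor for the union of those length-sets, which is essentially definitional; (4) add up the cardinalities. One minor point to handle carefully: Lemma~\ref{lem:appearance} as stated gives an attractor for lengths $[s..2s]$ of $\mathbf{w}[0..n-1]$ for \emph{all} $n$, so there is no compatibility issue between the pieces — they are all attractors for the same prefix $\mathbf{w}[0..n-1]$. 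I would also remark that one can truncate each $S_j$ to indices $< n$ without loss, since indices beyond $n-1$ contribute nothing; the lemma's proof already works inside a prefix of length $\min(n, 2\appearconst_\mathbf{w} s)$.

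I do not expect any serious obstacle here — the theorem is a routine "sum over scales" consequence of the lemma, which is where the real content lives. The only thing requiring a moment's thought is making sure the dyadic cover is complete and that the overlap at endpoints (each $2^{j+1}$ appears in two consecutive intervals) causes no problem, which it does not. A secondary cosmetic choice is whether to start the scales at $j=0$ (covering length $1$, hence single letters, which forces at least $|\Sigma|$ of the positions but is automatically handled) or to absorb small lengths into a single base case; starting at $j=0$ is cleanest and keeps the statement uniform.
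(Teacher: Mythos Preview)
Your proposal is correct and follows essentially the same approach as the paper: cover the range of factor lengths $[1..n]$ by $O(\log n)$ intervals of the form $[s..2s]$, apply Lemma~\ref{lem:appearance} to each, and take the union. The only cosmetic difference is that the paper uses the disjoint intervals $[2^k-1..2^{k+1}-2]$ while you use the overlapping intervals $[2^j..2^{j+1}]$, which makes no difference to the argument or the bound.
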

\begin{proof}
	Cover the range of factor lengths ($1$ up to $n$) with intervals
	$$
	[1..2] \cup [3..6] \cup [7..14] \cup \cdots \cup [2^{k}-1..2^{k+1}-2]
	$$
	where the last interval contains $n$, so $k = O(\log n)$. By Lemma~\ref{lem:appearance}, there is a string attractor of $\mathbf{w}$ for the lengths in each interval of size $O(\appearconst_\mathbf{w})$. Define $S$ to be the union of these partial string attractors, and note that it is a string attractor of $\mathbf{w}$ for lengths $[1..n]$, and has size $O(\appearconst_\mathbf{w} \log n)$. 
\end{proof}
It turns out that this is tight, as shown by an example of Mantaci et al.\ \cite{Mantaci:2019} using the characteristic sequence of powers of two. There is a family of $\Omega(\log n)$ factors (e.g., factors of the form $10^{2^{2k}-1}1$) that have unique and pairwise disjoint occurrences in the prefix of length $n$, and thus every string attractor for this prefix must have at least $\Omega(\log n)$ indices. 

The key point in the argument above is that some factors do not recur often (or at all), requiring dedicated indices in the string attractor just for those factors. If every factor recurs sufficiently often, then indices meant to cover long factors will also be near various short factors. We may be able to move the index slightly so that it covers all the necessary long factors, but also some of the short factors. 

\begin{lemma}
	\label{lem:recurrence}
	Let $\mathbf{w} \in \Sigma^{\omega}$ be an infinite word with $\appearconst_\mathbf{w}, \recurconst_\mathbf{w} < \infty$. Suppose $S$ is a string attractor of $\mathbf{w}$ for lengths $L \subseteq [1..\lfloor \frac{s}{2 \recurconst_\mathbf{w}} \rfloor]$. For all $n, s \geq 1$, there is a string attractor of $\mathbf{w}[0..n-1]$ for lengths $L \cup [3s..6s]$ having size at most $\max(|S|, 3 \appearconst_\mathbf{w})$.
\end{lemma}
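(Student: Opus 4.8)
The plan is to build the claimed string attractor for $\mathbf{w}[0..n-1]$ and lengths $L \cup [3s..6s]$ by combining the given attractor $S$ for the short lengths $L$ with an arithmetic-progression "skeleton'' that handles the new length window $[3s..6s]$, and then to argue that we can fold most of $S$ into that skeleton so the total size does not exceed $\max(|S|, 3\appearconst_\mathbf{w})$. First I would apply the appearance constant: all factors of length $\le 6s$ occur inside the prefix $x$ of $\mathbf{w}$ of length at most $6 \appearconst_\mathbf{w} s$. Taking an arithmetic progression of stride $2s$ through $[0..6\appearconst_\mathbf{w}s - 1]$ gives a set $P$ of at most $3\appearconst_\mathbf{w}$ indices, and since consecutive terms of $P$ are $2s$ apart and each length-$\ell$ factor with $3s \le \ell \le 6s$ occurring in $x$ spans at least $3s > 2s$ consecutive positions, every such factor crosses a point of $P$. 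Thus $P$ alone is a string attractor of $\mathbf{w}[0..n-1]$ for lengths $[3s..6s]$; the issue is only to also cover lengths in $L$ without paying an extra $|S|$.

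The key step is to re-route the indices of $S$ onto the skeleton $P$. Here is where the recurrence constant enters: every factor of length $\le \lfloor s/(2\recurconst_\mathbf{w})\rfloor$ recurs within any window of length $s$ (this is essentially the definition of $\recurconst_\mathbf{w}$, since a length-$\ell$ factor with $\ell \le s/(2\recurconst_\mathbf{w})$ has two consecutive occurrences separated by at most $\recurconst_\mathbf{w} \ell \le s/2$, hence an occurrence inside any length-$s$ window, with room to spare). So fix an index $i \in S$. The factors of length in $L$ that $S$ needed $i$ to cover each have, by recurrence, an occurrence inside the length-$s$ window centered near the nearest skeleton point $p \in P$ to $i$ — but more carefully, we should argue that a single occurrence near $p$ simultaneously realizes all the factors that $i$ was responsible for. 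The clean way: $i$ sits in some length-$s$ sub-block $B$ of $x$; because $P$ has stride $2s$, some $p \in P$ lies within distance $2s$ of $i$, and the length-$\le s/(2\recurconst_\mathbf{w})$ factors occurring around position $i$ — in particular the specific occurrences of $L$-length factors that touched $i$ — must each recur so as to touch $p$. Hence replacing every $i \in S$ by the corresponding $p \in P$ yields a set $S' \subseteq P$ that is still a string attractor for lengths $L$, and $P$ handles $[3s..6s]$, so $P$ itself works for $L \cup [3s..6s]$ with $|P| \le 3\appearconst_\mathbf{w}$.

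Thus the output set is either $P$ (of size $\le 3\appearconst_\mathbf{w}$) or, in the degenerate small-$s$ regime where the reduction above does not kick in, $S$ together with a constant number of extra indices — in all cases bounded by $\max(|S|, 3\appearconst_\mathbf{w})$. I expect the main obstacle to be the bookkeeping in the re-routing step: one must verify that a single shifted occurrence near a skeleton point $p$ simultaneously captures \emph{every} $L$-length factor that the original index $i$ was covering, rather than needing a separate shift per factor. This is where the hypothesis $L \subseteq [1..\lfloor s/(2\recurconst_\mathbf{w})\rfloor]$ is doing real work — the factors are short enough (compared to $s$) that an entire length-$s$ neighborhood of $i$, with all the occurrences it contains, recurs near $p$ — and getting the constants to line up (the $2\recurconst_\mathbf{w}$, the stride $2s$, the window $[3s..6s]$) is the delicate part.
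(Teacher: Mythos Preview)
Your setup matches the paper's: an arithmetic progression $P$ of stride $2s$ through the appearance window of length $6\appearconst_\mathbf{w}s$ gives $|P|\le 3\appearconst_\mathbf{w}$ and covers all factors of length in $[3s..6s]$. The recurrence observation is also right: the bundle of $L$-length factors touching a fixed $i\in S$ lives in an interval of length $< s/\recurconst_\mathbf{w}$, and that whole interval (as a single factor) recurs inside any window of length $s$.

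The gap is in the re-routing step. You claim the short factors that touched $i$ ``must each recur so as to touch $p$'' for the \emph{fixed} nearest skeleton point $p\in P$, and conclude that $P$ itself is a string attractor for $L$, hence for $L\cup[3s..6s]$, giving the bound $3\appearconst_\mathbf{w}$ outright. But recurrence only places a copy of the bundle somewhere in a length-$s$ window near $p$; it does not force that copy to cross the single point $p$. What the argument actually yields is: for each $i\in S$ there is a position $p'$ within $s/2$ of some $p\in P$ such that $p'$ touches every $L$-length factor that $i$ was covering. That is, you must \emph{perturb} $p$ to $p'$, and a given $p$ can only be perturbed once, so it absorbs at most one $i\in S$.

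This is exactly why the bound is $\max(|S|,3\appearconst_\mathbf{w})$ rather than $3\appearconst_\mathbf{w}$: the paper spaces the skeleton at $2s$ precisely so that each point has $\pm s/2$ of slack (after perturbation the maximum gap is still $\le 3s$, so the $[3s..6s]$ coverage survives), then pairs off skeleton points with elements of $S$ one-for-one. If $|S|\le |P|$ every $i$ is absorbed and the size is $|P|\le 3\appearconst_\mathbf{w}$; if $|S|>|P|$ the unabsorbed elements of $S$ remain and the size is $|S|$. Your ``degenerate small-$s$'' hedge does not account for this---the $|S|$ branch of the $\max$ is genuinely needed whenever $|S|>3\appearconst_\mathbf{w}$, not just in edge cases.
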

\begin{proof}
    Similar to Lemma~\ref{lem:appearance}, we cover the window containing all factors of length $\leq 6s$, (i.e., the prefix of length $\leq \min(n, 6 \appearconst_\mathbf{w} s)$ with an array of points where there is no gap of length $3s$ or longer without points, and thus the array intersects any factor of length $\geq 3s$. The difference is that we pack the points closer, averaging $\leq 2s$ from the next point, so that we may perturb each point left or right (independently of the other points) by up to $\frac{s}{2}$ without introducing a gap of $\geq 3s$. Let $S'$ be this set of points, with the perturbations of the elements to be determined later. Since the spacing of the points is initially $2s$, we need at most $3 \appearconst_\mathbf{w}$ points to cover the apperance window for factors of length $\leq 6s$.
    
    Recall that any window of length $s$ contains all factors of length $\leq \frac{s}{\recurconst_\mathbf{w}}$ because the first occurrence $\mathbf{w}$ is linearly recurrent with recurrent constant $\recurconst_\mathbf{w}$. In particular, all factors of length $\leq \frac{s}{2\recurconst_\mathbf{w}}$ that intersect some index $i \in S$ exist within an interval of length at most $\frac{s}{\recurconst_\mathbf{w}}$, and that factor is contained somewhere in any window of length $s$. Thus, we can perturb any index in $S'$ within its window of length $s$ to intersect all factors of length $\leq \frac{s}{2 \recurconst_\mathbf{w}}$ that some index $i \in S$ is responsible for in $S$. 
	
	Clearly the union $S \cup S'$ is a string attractor for lengths $L \cup [3s..6s]$ as desired, but it is too large. By the argument above, we can remove elements of $S$ by perturbing elements of $S'$, until we run out of indices in one or the other. If we run out of indices in $S$ then only indices of $S'$ are left and the string attractor has size $|S'| \leq 3 \appearconst_\mathbf{w}$. Otherwise, we removed one element of $S$ for each element of $S'$ we added, so the string attractor has size $|S|$. In either case, the size is at most $\max( |S|, 3 \appearconst_\mathbf{w})$, as required. 
\end{proof}

\begin{theorem}
	\label{thm:recurrenceattractor}
	Let $\mathbf{w} \in \Sigma^{\omega}$ be a linearly recurrent word with $\appearconst_\mathbf{w}, \recurconst_\mathbf{w} < \infty$. There is a string attractor of $\mathbf{w}[0..n-1]$ having size $O(\appearconst_\mathbf{w} \log \recurconst_\mathbf{w})$.
\end{theorem}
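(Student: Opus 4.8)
The plan is to bootstrap the partial string attractors of Lemma~\ref{lem:recurrence} up from very short lengths to length $n$, doubling the scale at each stage, and to show that after the first few stages the size stops growing. Concretely, I would build a sequence of sets $S_0, S_1, S_2, \ldots$ where $S_j$ is a string attractor of $\mathbf{w}[0..n-1]$ for lengths $[1..L_j]$, with $L_j$ growing geometrically. The base case $S_0$ handles all lengths in $[1..\ell_0]$ for some fixed constant $\ell_0$ depending only on $\recurconst_\mathbf{w}$; for this I can simply take $S_0$ to be every index in the appearance window for factors of length $\ell_0$, which by Theorem~\ref{thm:appearanceattractor} (or just the definition of $\appearconst_\mathbf{w}$) has size $O(\appearconst_\mathbf{w} \ell_0) = O(\appearconst_\mathbf{w} \recurconst_\mathbf{w})$. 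Hmm --- actually that gives a factor $\recurconst_\mathbf{w}$, not $\log \recurconst_\mathbf{w}$, so the base case itself must be handled more carefully, using Lemma~\ref{lem:appearance} on a $\log$-scaled cover of $[1..\ell_0]$, which costs $O(\appearconst_\mathbf{w} \log \ell_0) = O(\appearconst_\mathbf{w} \log \recurconst_\mathbf{w})$ indices. That is the source of the $\log \recurconst_\mathbf{w}$ in the statement.

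The inductive step is exactly Lemma~\ref{lem:recurrence}: given $S_j$, a string attractor for lengths $L = [1..L_j]$, choose the scale parameter $s = s_j$ so that $L_j \leq \lfloor s_j / (2\recurconst_\mathbf{w}) \rfloor$, i.e.\ $s_j \approx 2\recurconst_\mathbf{w} L_j$; the lemma then produces a string attractor $S_{j+1}$ for lengths $L \cup [3s_j .. 6s_j]$, of size at most $\max(|S_j|, 3\appearconst_\mathbf{w})$. The key arithmetic point to check is that the newly covered band $[3s_j..6s_j]$ overlaps or abuts the old range $[1..L_j]$ --- i.e.\ that $3 s_j \leq L_j + 1$ would be false, so in fact there is a gap and I need the ranges to connect. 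Let me re-examine: $3 s_j \approx 6 \recurconst_\mathbf{w} L_j$, which is much bigger than $L_j$, leaving the band $(L_j, 3s_j)$ uncovered. So the cleaner bookkeeping is to iterate the lemma with a whole geometric ladder of scales between consecutive doublings --- apply Lemma~\ref{lem:recurrence} roughly $\log_2(6\recurconst_\mathbf{w})$ times in a row (each time with $s$ halved relative to the previous, down to $s \approx 2\recurconst_\mathbf{w} L_j$) so that the union of bands $[3s..6s]$ tiles all of $(L_j, L_{j+1}]$ where $L_{j+1} \approx 2 L_j$. Within one such doubling, $|S|$ can increase, but only up to $\max(|S_j|, 3\appearconst_\mathbf{w})$ at each sub-step, so it never exceeds $\max(|S_j|, 3\appearconst_\mathbf{w})$ overall; hence by induction every $S_j$ has size $O(\appearconst_\mathbf{w} \log \recurconst_\mathbf{w})$ once the base case is that large.

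I would then finish by noting that after $O(\log n)$ doublings we reach $L_j \geq n$, and $S_j$ is a string attractor of $\mathbf{w}[0..n-1]$ for lengths $[1..n]$ --- i.e.\ an honest string attractor --- of size $O(\appearconst_\mathbf{w} \log \recurconst_\mathbf{w})$, as claimed. (One should double-check that Lemma~\ref{lem:recurrence} is legitimately applicable with $L$ being a full interval $[1..L_j]$ rather than a sparse set; it is, since the hypothesis only asks $L \subseteq [1..\lfloor s/(2\recurconst_\mathbf{w})\rfloor]$.) The main obstacle I anticipate is precisely this scale-matching: getting the constants in the geometric ladder of scales to line up so that consecutive bands $[3s..6s]$ genuinely cover every intermediate length with no gap, while simultaneously keeping $s$ large enough relative to the accumulated length range $L$ that the perturbation-by-$s/2$ argument of Lemma~\ref{lem:recurrence} still applies --- this forces the number of sub-steps per doubling to be about $\log \recurconst_\mathbf{w}$, which is where the final bound comes from, and it needs to be tracked carefully rather than hand-waved.
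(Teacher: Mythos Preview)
Your induction has a real gap at the scale-matching step you flagged at the end. The hypothesis of Lemma~\ref{lem:recurrence} requires the \emph{entire} current length set $L$ to lie in $[1..\lfloor s/(2\recurconst_\mathbf{w})\rfloor]$, so once $L=[1..L_j]$ the smallest admissible $s$ is about $2\recurconst_\mathbf{w} L_j$, and the new band $[3s..6s]$ starts at roughly $6\recurconst_\mathbf{w} L_j$. There is thus an unavoidable gap $(L_j,\,6\recurconst_\mathbf{w} L_j)$ that no choice of $s$ can reach while keeping $L$ as input. Your proposed fix---apply the lemma $\approx\log_2(6\recurconst_\mathbf{w})$ times with $s$ halved each time---does not work in either reading: if you feed the output of one application into the next, then $L$ now contains $6s$ and the next $s'$ must satisfy $6s\le s'/(2\recurconst_\mathbf{w})$, i.e.\ $s'$ must \emph{increase}, not halve; if instead you apply the lemma several times independently with the same input $S_j$ and take the union of the outputs, those outputs are distinct sets of size up to $\max(|S_j|,3\appearconst_\mathbf{w})$ each, so the union can be $\Theta(\log\recurconst_\mathbf{w})\cdot|S_j|$, not $\max(|S_j|,3\appearconst_\mathbf{w})$. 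Either way the sentence ``it never exceeds $\max(|S_j|,3\appearconst_\mathbf{w})$ overall'' is unjustified, and with it the induction collapses.

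The paper's argument sidesteps this by \emph{not} maintaining a single set covering a contiguous interval $[1..L_j]$. Instead it lays down the dyadic bands $[3\cdot 2^i..6\cdot 2^i]$ for all $i$ up front, and uses the absorption clause of Lemma~\ref{lem:recurrence} in the forward direction: the attractor for the band at scale $2^i$ can absorb the (already constructed) attractor for the band at scale $2^{i-k}$, where $k=\log_2\recurconst_\mathbf{w}+O(1)$. Iterating, a single set of size $\le 3\appearconst_\mathbf{w}$ ends up covering every $k$th band; taking $k$ such sets (one per residue class mod $k$) plus the base attractor for $[1..2]$ gives a full attractor of size $O(k\appearconst_\mathbf{w})=O(\appearconst_\mathbf{w}\log\recurconst_\mathbf{w})$. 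In effect the paper runs $k$ interleaved chains, each with $s$ jumping by a factor $2^k$ so the hypothesis of the lemma is always met, and the gaps in any one chain are filled by the other $k-1$ chains rather than by trying to bridge them within a single growing $L$.
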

\begin{proof}
	Cover the interval of possible lengths, $[1,n]$, with $O(\log n)$ intervals as follows
	$$
	[1..2] \cup [3..6] \cup [6..12] \cup [12..24] \cup \cdots \cup [3 \cdot 2^k..6 \cdot 2^k].
	$$
	By Lemma~\ref{lem:appearance}, there is a string attractor of $\mathbf{w}$ for lengths $[1..2]$ having at most $2 \appearconst_\mathbf{w}$ elements. Thereafter, Lemma~\ref{lem:recurrence} says there is a string attractor for lengths $[3 \cdot 2^i..6 \cdot 2^k]$ of size $3 \appearconst_\mathbf{w}$, which also obsoletes every string attractor for lengths $[1.. \lfloor 2^{i-1} / \recurconst_\mathbf{w} \rfloor]$ having fewer than $3 \appearconst_\mathbf{w}$ elements.

	In other words, Lemma~\ref{lem:recurrence} says that the string attractor for lengths $[3 \cdot 2^i..6 \cdot 2^i]$ is no longer necessary once we have the string attractor for lengths $[3 \cdot 2^{i+k}..6 \cdot 2^{i+k}]$, where $k = \log_2 \recurconst_\mathbf{w} + O(1)$. It follows that the full string attractor $S$ for $\mathbf{w}[0..n-1]$ is the union of only the last $k$ string attractors (corresponding to the longest intervals), plus the string attractor for lengths $[1..2]$, and therefore 
	$$
	|S| \leq 3 k \appearconst_\mathbf{w} + 2\appearconst_\mathbf{w}  = O(\appearconst_\mathbf{w} \log \recurconst_\mathbf{w}).
	$$
\end{proof}

\begin{theorem}
	Let $\mathbf{w}$ be a $k$-automatic word. Let $\gamma_\mathbf{w}(n)$ be the size of the smallest string attractor of $\mathbf{w}[0..n-1]$. Then
	\begin{itemize}
		\item $\gamma_\mathbf{w}(n)$ is either $\Theta(1)$ or $\Theta(\log n)$,
		\item it is decidable whether $\gamma_\mathbf{w}(n)$ is $\Theta(1)$ or $\Theta(\log n)$, and
		\item in the case that $\gamma_\mathbf{w}(n) = \Theta(1)$, the sequence $\gamma_\mathbf{w}(1) \gamma_\mathbf{w}(2) \cdots$ is $k$-automatic and the automaton can be computed.
	\end{itemize} 
\end{theorem}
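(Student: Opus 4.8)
The plan is to separate the two asserted cases and show the separation is effective. One inclusion is immediate: since $\mathbf{w}$ is $k$-automatic, $\appearconst_\mathbf{w}<\infty$ and is computable by Corollary~\ref{cor:automaticrecurrence}, so Theorem~\ref{thm:appearanceattractor} already gives $\gamma_\mathbf{w}(n)=O(\log n)$ for every $k$-automatic word; the real content is a matching lower bound in the unbounded case, together with decidability. To organize this, I would introduce a first-order predicate --- hence, by Theorem~\ref{thm1}, a $k$-automatic set $S\subseteq\mathbb{N}_{>0}\times\mathbb{N}_{>0}$ --- recording, for each position $p$, the least length $\ell$ (if any) such that the factor $\mathbf{w}[p..p+\ell-1]$ is \emph{isolated}: it has no other occurrence in $\mathbf{w}$, or (a variant adapted to recurrent-but-not-linearly-recurrent behaviour) no other occurrence inside a window whose length is a fixed multiple of $p$. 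This is built from $\faceq$ and $\recurfac$ exactly in the style already used in the paper. Everything then reduces to two claims about $S$: \textbf{(a)} if $S$ is finite then $\gamma_\mathbf{w}(n)=O(1)$, and \textbf{(b)} if $S$ is infinite then $\gamma_\mathbf{w}(n)=\Omega(\log n)$. Granting these the dichotomy is immediate (a set is finite or infinite, and $\Theta(\log n)\neq\Theta(1)$), and decidability follows since finiteness of a $k$-automatic set is decidable --- equivalently, run in parallel the searches ``for $c=1,2,3,\dots$, do all prefixes of $\mathbf{w}$ admit a string attractor of size $\le c$?'' (each step decidable via $\sa$ and Theorem~\ref{thm1}) and ``is $S$ infinite?''; by the dichotomy exactly one of the two halts.

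For \textbf{(a)}, finiteness of $S$ yields a computable $N$ beyond which every factor $\mathbf{w}[p..q]$ with $p\ge N$ recurs infinitely often, so the shifted word $\mathbf{w}':=\mathbf{w}[N..]$ is a recurrent $k$-automatic word. I would then invoke the structural fact that a recurrent $k$-automatic word is in fact linearly recurrent --- which can itself be confirmed for $\mathbf{w}'$ by checking the closed sentence $\forall n\,\exists m\ \recur(m,n)$ and bounding $\leastrecur$ via Lemma~\ref{lem:supremum} --- so that $\recurconst_{\mathbf{w}'}<\infty$ is computable by Corollary~\ref{cor:automaticrecurrence}. Theorem~\ref{thm:recurrenceattractor} then produces a string attractor of $\mathbf{w}'[0..n-N-1]$ of size $O(\appearconst_{\mathbf{w}'}\log\recurconst_{\mathbf{w}'})=O(1)$; translating it by $N$ and adjoining $\{0,1,\dots,N-1\}$ gives an $O(1)$-size string attractor of $\mathbf{w}[0..n-1]$ with a computable constant.

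For \textbf{(b)}, if $S$ is infinite it must contain pairs $(p,\ell)$ with $\ell$ unbounded (only finitely many distinct short strings can be isolated), and by construction $\ell\le C p$ for a computable constant $C$ (apply Lemma~\ref{lem:supremum} to $S$, using that each $p$ occurs in at most one pair). I would then thin to a geometric family: the projection of $S$ to its first coordinate is an infinite $k$-automatic subset of $\mathbb{N}$, and since the set of lengths of base-$k$ representations of a regular language is ultimately periodic, it contains positions $p_1<p_2<\cdots$ with $p_{j+1}/p_j$ bounded, and bounded below by a constant exceeding $C+1$. The corresponding isolated factors then have pairwise disjoint occurrence intervals, each being the unique occurrence of that factor inside $\mathbf{w}[0..n-1]$ once $n$ is large enough to contain it, so every string attractor of $\mathbf{w}[0..n-1]$ must hit each such interval; and the prefix of length $n$ contains $\Omega(\log n)$ of the $p_j$ because they grow geometrically, whence $\gamma_\mathbf{w}(n)=\Omega(\log n)$. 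The third bullet is then a corollary: in the bounded case, claim \textbf{(a)} supplies a computable $c_0$ with $\gamma_\mathbf{w}(n)\le c_0$ for all $n$, and for each $m\le c_0$ the formula
$$\big(\exists i_1,\dots,i_m\ \sa(i_1,\dots,i_m,n)\big)\ \wedge\ \neg\big(\exists i_1,\dots,i_{m-1}\ \sa(i_1,\dots,i_{m-1},n)\big)$$
is first order, so Theorem~\ref{thm1} yields an automaton for $\{n:\gamma_\mathbf{w}(n)=m\}$; combining these finitely many automata shows $\gamma_\mathbf{w}(1)\gamma_\mathbf{w}(2)\cdots$ is $k$-automatic and computes the automaton.

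The hard part will be pinning down $S$ so that \textbf{(a)} and \textbf{(b)} both go through simultaneously. On the \textbf{(a)} side one must be sure that ``$S$ finite'' really forces the tail $\mathbf{w}'$ to be not merely recurrent but \emph{linearly} recurrent with a computable constant --- the one spot where a genuine structural input about $k$-automatic words enters, rather than a bare first-order manipulation. On the \textbf{(b)} side one must guarantee that the isolated factors occur at a \emph{geometric} sequence of scales (rather than, say, a doubly-exponential sequence, which would only give $\gamma_\mathbf{w}(n)=\Omega(\log\log n)$), which is exactly what the ultimate-periodicity-of-representation-lengths argument secures.
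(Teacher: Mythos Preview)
Your approach is essentially the paper's: both define a first-order (hence $k$-automatic) set capturing non-recurrent behaviour, branch on whether it is finite or infinite, invoke Theorem~\ref{thm:recurrenceattractor} on the recurrent tail for the $O(1)$ case, and exhibit a geometrically spaced family of factors with pairwise disjoint unique occurrences for the $\Omega(\log n)$ case; the third bullet is handled identically via the formulas $\sa(\cdot)$.

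The differences are implementation-level but worth noting. The paper takes $N=\{\,i:\text{some }\mathbf{w}[i..j]\text{ has finitely many occurrences, the first at }i\,\}$, whereas you record pairs $(p,\ell)$ with $\ell$ the least length of a \emph{uniquely}-occurring factor at $p$; these projections coincide (a finitely-occurring factor, extended at its last occurrence, becomes uniquely-occurring), so your case split is the same. For the $\Omega(\log n)$ bound, the paper builds a synchronized successor map $i\mapsto i'$ on $N$ (next index past all occurrences of the shortest non-recurrent factor at $i$) and invokes $i'=O(i)$ to get geometric growth directly, while you bound $\ell\le Cp$ via Lemma~\ref{lem:supremum} and then thin the projection using ultimate periodicity of base-$k$ representation lengths; both routes yield the same disjoint-family argument. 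Finally, you are more explicit than the paper about the one genuinely structural input---that a recurrent $k$-automatic tail has $\recurconst<\infty$---which both proofs need in order to apply Theorem~\ref{thm:recurrenceattractor}.
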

\begin{proof}
	By Corollary~\ref{cor:automaticrecurrence}, we know $\appearconst_\mathbf{w}$ and $\recurconst_\mathbf{w}$ exist and are computable. From the appearance constant alone, Theorem~\ref{thm:appearanceattractor} gives an $O(\log n)$ upper bound for $\gamma_{w}(n)$. 
		
	The set 
	\begin{align*}
	N := \{ i \in \mathbb{N} &: \text{for some $j>i$, $\mathbf{w}[i..j]$ has finitely many occurrences in $\bf w$} \\
	& \qquad \text{and $\mathbf{w}[i..j]$ is the first such} \}.
	\end{align*}
	is first-order expressible.
	Suppose $N$ is finite, with maximum element $a = \max N$. Then the suffix $\mathbf{w}[a+1..\infty]$ is recurrent and has $O(1)$ size string attractors by Theorem~\ref{thm:recurrenceattractor}, to which we add $\{ 1, \ldots, a \}$ to get $O(1)$ string attractors for every prefix of $\mathbf{w}$. In fact, we get a computable upper bound for the size of the string attractors: $\gamma_\mathbf{w}(n) \leq B$ where $B := a + O(\appearconst_\mathbf{w} \log \recurconst_\mathbf{w})$. We can use the formulas from Section~\ref{basics} to check the $\leq B$ element string attractors and determine the minimum size for all $n$. It follows that $\gamma_\mathbf{w}(1) \gamma_\mathbf{w}(2) \cdots$ is a $k$-automatic sequence. 
	
	Otherwise, $N$ is infinite. Hence $N$ is $k$-automatic, and so is the set $N' \subseteq N \times N$ where $(i,i') \in N'$ if $i' \in N$ is the first element of $N$ past the last occurrence of the shortest factor beginning at $i \in N$ that has finitely many occurrences. In other words, we might as well consider the \emph{shortest} non-recurrent factors starting at the positions in $N$, and when we do, $(i, i') \in N$ if $i'$ is next element of $N$ such that all occurrences of the factor at $i$ come before all occurrences of the factor at $i'$, and thus no occurrences overlap. 
	
	The fact that $i'$ is a $k$-synchronized function of $i$ (i.e., $(i,i')$ is $k$-automatic, and  for every $i$ there is only one $i'$) implies $i' = O(i)$, so there exists a constant $c$ such that $i' \leq ci$ for all $i \geq 1$. It follows that there is an infinite set of factors $\mathbf{w}[i_1..i_1+\ell_1-1], \mathbf{w}[i_2..i_2+\ell_2-1], \ldots$ where each $\mathbf{w}[i_j..i_j+\ell_j-1]$ occurs only in $\mathbf{w}[i_j..i_{j+1}-1]$ (and is therefore disjoint from occurrences of the others) and $i_j = O(c^j)$. The flip side is that $\Omega(\log n)$ of these factors have some occurrence in $\mathbf{w}[0..n-1]$, and must be intersected by $\Omega(\log n)$ distinct elements in the string attractor. Hence, $\gamma_\mathbf{w}(n) = \Omega(\log n)$, matching the earlier $O(\log n)$ upper bound.
\end{proof}

\section{Greedy string attractors}

Finding the optimal string attractor for a word is known to be $\mathsf{NP}$-complete \cite{Kempa&Prezza:2018}. Although this hardness result does not necessarily carry over to automatic sequences, it is still natural to consider approximations and heuristic algorithms. For example, we can define a string attractor greedily.
\begin{definition}
	Let $\mathbf{w} \in \Sigma^{\omega}$ be an infinite word. The \emph{greedy string attractor for $\mathbf{w}$}, $S$, is the limit of the sequence $\emptyset = S_0 \subseteq S_1 \subseteq \cdots \subseteq \mathbb N$, which is constructed iteratively as follows. For all $n \geq 0$, let $S_{n+1} = S_n \cup \{ j \}$ where $j$ is the smallest integer such that $S_{n}$ is \emph{not} a string attractor for $\mathbf{w}[0..j]$ (or $S_{n+1} = S_{n}$ if no such integer exists). 
\end{definition}

We say the first occurrence of a factor in $\mathbf{w}$ is \emph{novel}, and prove that the elements of the greedy string attractor are related to minimal novel factors. 
\begin{lemma}
	\label{lem:peaks}
	Suppose $S \subseteq \mathbb N$ is a string attractor for a prefix $\mathbf{w}[0..i]$ of an infinite word $\mathbf{w}$, and suppose $i \in S$. Let $j$ be the smallest integer such that $S$ is not a string attractor for $\mathbf{w}[0..j]$. Then $\mathbf{w}[j-\ell+1..j]$ is novel, no nonempty proper factor is novel. Also, $\ell$ is at most $j - i$. 
\end{lemma}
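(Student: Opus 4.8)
The plan is to pin down $u := \mathbf{w}[j-\ell+1..j]$ as the shortest factor of $\mathbf{w}[0..j]$ that $S$ fails to cover, and then read off all three assertions from the minimality of its length $\ell$. First I would describe the shape of the failure at position $j$. By minimality of $j$, the set $S$ is a string attractor for every shorter prefix, in particular for $\mathbf{w}[0..j-1]$. Call a nonempty factor $f$ of $\mathbf{w}[0..j]$ \emph{uncovered} if no occurrence of $f$ in $\mathbf{w}[0..j]$ touches $S$. Since $S$ is a string attractor for $\mathbf{w}[0..j-1]$ but not for $\mathbf{w}[0..j]$, uncovered factors exist, and any uncovered $f$ cannot be a factor of $\mathbf{w}[0..j-1]$, since otherwise it would already have a touching occurrence there. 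Hence every occurrence of an uncovered factor in $\mathbf{w}[0..j]$ ends at position $j$, so such a factor has exactly one occurrence in $\mathbf{w}[0..j]$. Let $\ell$ be the least length of an uncovered factor; then $u := \mathbf{w}[j-\ell+1..j]$ is the unique uncovered factor of length $\ell$.

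Two of the three claims are then immediate. Because $u$ is not a factor of $\mathbf{w}[0..j-1]$, the occurrence $\mathbf{w}[j-\ell+1..j]$ is its first occurrence in $\mathbf{w}$, so $u$ is novel. For the inequality $\ell \le j-i$: the occurrence $\mathbf{w}[j-\ell+1..j]$ does not touch $S$ and $i \in S$, so $i \notin [j-\ell+1..j]$; as the setup guarantees $i < j$, this forces $i \le j-\ell$, that is, $\ell \le j-i$.

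The substantive point is that no nonempty proper factor of $u$ is novel, i.e., every such factor already occurs before position $j-\ell+1$; here the minimality of $\ell$ does the work. Let $g$ be a nonempty proper factor of $u$, so $1 \le |g| \le \ell-1$. Since $\ell$ is least among lengths of uncovered factors, $g$ is not uncovered, so it has an occurrence $\mathbf{w}[a..a+|g|-1]$ in $\mathbf{w}[0..j]$ touching some $s \in S$. If $a \ge j-\ell+1$, this occurrence lies inside the window $[j-\ell+1..j]$, whence $s \in [j-\ell+1..j]$ and the occurrence $u = \mathbf{w}[j-\ell+1..j]$ itself would touch $S$, contradicting that $u$ is uncovered. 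Therefore $a \le j-\ell$, so $g$ has an occurrence strictly to the left of the occurrence of $u$; in particular no occurrence of $g$ inside $\mathbf{w}[j-\ell+1..j]$ is a first occurrence, so $g$ is not novel.

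I expect the main obstacle to be precision rather than depth. Two points need care. First, one must fix the exact meaning of ``no nonempty proper factor is novel'' — the workable reading is that every proper nonempty factor of $u$ has an occurrence ending at or before $j-1$ (indeed starting at or before $j-\ell$), which is what the argument above delivers. Second, one must justify $i < j$: this is precisely the role of the hypothesis that $S$ is a string attractor for the prefix $\mathbf{w}[0..i]$ — read as in the greedy construction, where $S$ is a string attractor for every prefix up to $\mathbf{w}[0..i]$ — and not merely that $i \in S$. Once the observation that an uncovered factor has its unique occurrence ending exactly at $j$ is in hand, each assertion is a one-line consequence of the minimality of $\ell$, and the last one also of $i \in S$.
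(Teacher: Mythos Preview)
Your proof is correct and follows essentially the same approach as the paper's: take $\ell$ minimal among lengths of factors of $\mathbf{w}[0..j]$ with no occurrence touching $S$, observe the factor must end at $j$ and be novel, then argue proper factors are not novel and that $i \le j-\ell$. The only cosmetic difference is that the paper treats the maximal proper prefix and suffix of $u$ separately (appealing to minimality of $j$ for the prefix and minimality of $\ell$ for the suffix, then noting every other proper factor sits inside one of these), whereas you handle all proper factors in one stroke via minimality of $\ell$; your uniform argument is slightly cleaner, and your explicit remark that $i<j$ requires the greedy-construction reading of the hypothesis is a point the paper leaves implicit.
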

\begin{proof}
	Since $S$ is not a string attractor for $\mathbf{w}[0..j]$, there is some factor $\mathbf{w}[k..k+\ell-1]$ (let it be the shortest factor, i.e., take $\ell$ to be minimal) that has no occurrence in $\mathbf{w}[0..j]$ intersecting $S$. Nearly everything else follows because of a contradiction argument.
	\begin{itemize}
		\item The factor ends at position $j = k+\ell-1$ (i.e., is of the form $\mathbf{w}[j-\ell+1..j]$), otherwise $\mathbf{w}[0..j-1]$ is not a string attractor, contradicting the minimality of $j$.
		\item The factor is novel, since an earlier occurrence would either contradict that no occurrence intersects $S$ (if it intersected $S$), or the minimality of $j$ (if it did not intersect $S$). 
		\item The prefix $\mathbf{w}[j-\ell+1..j-1]$ is not novel because it would contradict the minimality of $j$. 
		\item The suffix $\mathbf{w}[j-\ell+2..j]$ is not novel because it would contradict the minimality of $\ell$. 
		\item Every other proper factor is contained in either $\mathbf{w}[j-\ell+1..j-1]$ or $\mathbf{w}[j-\ell+2..j]$ and therefore not novel.
	\end{itemize}
	Finally, since $\mathbf{w}[j-\ell+1..j]$ does not intersect $i$, we have $i \leq j-\ell \implies \ell \leq j-i$. 
\end{proof}

\begin{theorem}
	Let $S \subseteq \mathbb{N}$ be the greedy string attractor for a word $\mathbf{w} \in \Sigma^{\omega}$. If $\appearconst_\mathbf{w} < \infty$ then for all $n \geq 0$, $S \cap [0..n-1]$ is a string attractor for $\mathbf{w}[0..n-1]$ of size $O(\appearconst_\mathbf{w} \log n)$. 
\end{theorem}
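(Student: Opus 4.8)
The plan is to track the indices that the greedy procedure inserts and to show, via Lemma~\ref{lem:peaks}, that they grow geometrically. Write $S=\{j_0<j_1<j_2<\cdots\}$ for these indices, so that $S_0=\emptyset$, $S_{m+1}=S_m\cup\{j_m\}$, and $j_0=0$ (since $\emptyset$ attracts no nonempty prefix). First I would record the basic invariant that $S_{m+1}$ is a string attractor for the prefix $\mathbf{w}[0..j_m]$: by the choice of $j_m$ as the \emph{least} integer for which $S_m$ fails, the set $S_m$ is already a string attractor for $\mathbf{w}[0..j_m-1]$, and then every nonempty factor of $\mathbf{w}[0..j_m]$ either has an occurrence inside $\mathbf{w}[0..j_m-1]$ touching $S_m\subseteq S_{m+1}$, or every one of its occurrences in $\mathbf{w}[0..j_m]$ uses position $j_m$ and hence touches $j_m\in S_{m+1}$. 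In particular $S_m$ is a string attractor for $\mathbf{w}[0..j_{m-1}]$ with $j_{m-1}=\max S_m\in S_m$, which is exactly the hypothesis needed to invoke Lemma~\ref{lem:peaks} at every step.

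The first assertion of the theorem then follows immediately. Given $n\ge 1$, let $j_t$ be the largest element of $S$ with $j_t<n$ (taking $S_{t+1}=S$ if $S$ is finite and exhausted before $n$); then $S\cap[0..n-1]=\{j_0,\dots,j_t\}=S_{t+1}$. By the definition of the greedy procedure, $j_{t+1}$ is the least integer for which $S_{t+1}$ is not a string attractor of the corresponding prefix, and $j_{t+1}\ge n>n-1$, so $S_{t+1}=S\cap[0..n-1]$ is a string attractor for $\mathbf{w}[0..n-1]$, as claimed.

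For the size bound I would combine Lemma~\ref{lem:peaks} with the appearance constant. Applying Lemma~\ref{lem:peaks} at step $m\ge 1$ produces a novel factor $\mathbf{w}[j_m-\ell_m+1..j_m]$ of some length $\ell_m\le j_m-j_{m-1}$. Since every length-$\ell_m$ factor occurs within the prefix of at most $\appearconst_\mathbf{w}\ell_m$ symbols, and this factor is novel (so its first, and only relevant, occurrence is the one ending at $j_m$), we obtain $j_m<\appearconst_\mathbf{w}\ell_m\le\appearconst_\mathbf{w}(j_m-j_{m-1})$, which rearranges to $j_m>\tfrac{\appearconst_\mathbf{w}}{\appearconst_\mathbf{w}-1}\,j_{m-1}$ for $m\ge 2$; together with $j_1\ge 1$ this gives $j_t\ge\bigl(1+\tfrac{1}{\appearconst_\mathbf{w}-1}\bigr)^{t-1}$. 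Hence $j_t<n$ forces $t-1<\log_{1+1/(\appearconst_\mathbf{w}-1)}n$, and the elementary estimate $\ln(1+x)\ge x/(1+x)$ with $x=1/(\appearconst_\mathbf{w}-1)$ yields $1/\ln\!\bigl(1+\tfrac{1}{\appearconst_\mathbf{w}-1}\bigr)\le\appearconst_\mathbf{w}$, so that $|S\cap[0..n-1]|=t+1=O(\appearconst_\mathbf{w}\log n)$.

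The one step that actually takes thought is this last one: arranging the two inequalities $\ell_m\le j_m-j_{m-1}$ (from Lemma~\ref{lem:peaks}) and $j_m<\appearconst_\mathbf{w}\ell_m$ (from the appearance constant) so that they compose into genuine multiplicative growth of the $j_m$. Everything else is bookkeeping, apart from the harmless degenerate cases $\appearconst_\mathbf{w}=1$ (which forces $\mathbf{w}$ to be unary, so $|S|=1$) and very small $n$, both of which are absorbed into the asymptotic constant.
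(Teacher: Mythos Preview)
Your proof is correct and follows essentially the same route as the paper's: establish that $S\cap[0..n-1]$ attracts $\mathbf{w}[0..n-1]$ by minimality of the greedy choices, then use Lemma~\ref{lem:peaks} together with the appearance constant to force geometric growth of consecutive greedy indices. The only cosmetic difference is the direction in which you feed the appearance constant into the gap estimate: the paper bounds the novel length from below by $j_{m-1}/\appearconst_\mathbf{w}$ to get $j_m\ge(1+1/\appearconst_\mathbf{w})\,j_{m-1}$, whereas you bound $j_m$ from above by $\appearconst_\mathbf{w}\ell_m$ to get the slightly sharper $j_m>(1+1/(\appearconst_\mathbf{w}-1))\,j_{m-1}$; both yield the same $O(\appearconst_\mathbf{w}\log n)$ conclusion.
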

\begin{proof}
	There is some iterate $S_k$ of the greedy construction, after which all indices added are $\geq n$. It is not hard to see that $S_k = S \cap \{0, \ldots, n-1\}$, and that $S_k$ is a string attractor for $\mathbf{w}[0..n-1]$ because otherwise the next step of the greedy construction would add an index $< n$. All that remains is to argue that $S_k$ has $O(\appearconst_\mathbf{w} \log n)$ elements. 
	
	By Lemma~\ref{lem:peaks}, two consecutive elements of $S$, $i$ and $j$, are separated by $\ell$, the length of a novel factor occurring after $i$. Since all factors of length $\leq i / \appearconst_\mathbf{w}$ have some occurrence beginning in the first $i$ symbols of $w$, a novel factor must be longer, i.e., $\ell > i / \appearconst_\mathbf{w}$. That is, 
	$$
	j \geq i + \ell \geq i +  \frac{i}{\appearconst_\mathbf{w}} \geq \left( 1 + \frac{1}{\appearconst_\mathbf{w}} \right) i.
	$$
	This guarantees at least geometric growth in the elements of $S$, and therefore there are at most logarithmically many elements in $S \cap [0..n-1]$. Since the logarithm is base $1 + \frac{1}{\appearconst_\mathbf{w}}$, and 
	$$
	\frac{1}{\log(1 + \frac{1}{\appearconst_\mathbf{w}})} \leq (\appearconst_\mathbf{w} + \tfrac{1}{2}), 
	$$
	the size of the string attractor is $O(\appearconst_\mathbf{w} \log n)$. 
\end{proof}

\section*{Acknowledgments}

We thank Jean-Paul Allouche for bibliographic suggestions.   We thank Marinella Sciortino for the suggestion to look at the minimum and maximum span of string attractors.

\end{document}